\documentclass[prd,amsmath,superscriptaddress,twocolumn,nofootinbib]{revtex4-2}
\pdfoutput=1

\usepackage{textcomp}
\usepackage[utf8]{inputenc}
\usepackage{graphicx}
\usepackage{float}
\usepackage{epsf,latexsym,bbm,euscript}
\usepackage{amssymb,amsmath}
\usepackage{mathtools} 
\usepackage{times}
\usepackage{soul,xcolor}
\usepackage{mathtools}
\usepackage{mathrsfs}
\usepackage{array}
\usepackage{booktabs}
\usepackage{enumitem}
\usepackage{tensor}
\usepackage[dvipsnames]{xcolor}
\usepackage{tikz}
\usetikzlibrary{calc}

\usepackage[caption=false]{subfig}

\usepackage{amsthm}
\newtheorem{theorem}{Theorem}
\newtheorem{corollary}{Corollary}

\newcommand{\RN}[1]{\uppercase\expandafter{\romannumeral #1\relax}}

\usepackage{environ}
\NewEnviron{thincases}{\scalebox{0.5}[1]{$\displaystyle
		\left\{\scalebox{2}[1]{\setlength{\arraycolsep}{6pt}
			$\displaystyle\begin{array}{ll}
				\BODY
			\end{array}$}\right.$}}%}
			\NewEnviron{thinpmatrix}{\scalebox{0.5}[1]{$\displaystyle
	\left(\scalebox{2}[1]{$\displaystyle
		\begin{matrix}
			\BODY
		\end{matrix}$}
	\right)$}}
	
	\makeatletter
	\g@addto@macro\bfseries{\boldmath}
	\makeatother
	
	\makeatletter
	\newcommand*{\defeq}{\mathrel{\rlap{%
		\raisebox{0.3ex}{$\m@th\cdot$}}%
	\raisebox{-0.3ex}{$\m@th\cdot$}}%
=}
\newcommand*{\eqdef}{=\mathrel{\rlap{%
		\raisebox{0.3ex}{$\m@th\cdot$}}%
	\raisebox{-0.3ex}{$\m@th\cdot$}}%
	}
	\makeatother

	\usepackage{pifont}

	\usepackage{scalerel}
	\usepackage{tikz}
	\usetikzlibrary{svg.path}
	\definecolor{orcidlogocol}{HTML}{A6CE39}
	\tikzset{
orcidlogo/.pic={
	\fill[orcidlogocol] svg{M256,128c0,70.7-57.3,128-128,128C57.3,256,0,198.7,0,128C0,57.3,57.3,0,128,0C198.7,0,256,57.3,256,128z};
	\fill[white] svg{M86.3,186.2H70.9V79.1h15.4v48.4V186.2z}
	svg{M108.9,79.1h41.6c39.6,0,57,28.3,57,53.6c0,27.5-21.5,53.6-56.8,53.6h-41.8V79.1z M124.3,172.4h24.5c34.9,0,42.9-26.5,42.9-39.7c0-21.5-13.7-39.7-43.7-39.7h-23.7V172.4z}
	svg{M88.7,56.8c0,5.5-4.5,10.1-10.1,10.1c-5.6,0-10.1-4.6-10.1-10.1c0-5.6,4.5-10.1,10.1-10.1C84.2,46.7,88.7,51.3,88.7,56.8z};
}
}
\newcommand\orcidlink[1]{\href{https://orcid.org/#1}{\mbox{\scalerel*{
			\begin{tikzpicture}[yscale=-1,transform shape]
				\pic{orcidlogo};
			\end{tikzpicture}
		}{X}}}}
		
		\usepackage{url,hyperref}
		\hypersetup{colorlinks,linkcolor={blue!55!black},citecolor={red!50!black},urlcolor={blue!45!black},breaklinks=true}
		
		\usetikzlibrary{arrows.meta,calc,decorations.markings}

		\usepackage{url,hyperref}
		\hypersetup{colorlinks,linkcolor={blue!55!black},citecolor={red!50!black},urlcolor={blue!45!black}}
		
\begin{document}

\title{Thermodynamics of polymerized vacuum regular black holes\\ in anti-de Sitter spacetime}

\author{Sepideh Bakhoda\orcidlink{0000-0002-1926-7712}}
\email{s.bakhoda@gmail.com}
\affiliation{Institute for Theoretical Physics and Cosmology, Zhejiang University of Technology, Hangzhou, 310023, China}
\affiliation{Institute for Theoretical Sciences and Departments of Physics and Astronomy, Westlake University, Hangzhou 310030, China}

\author{Ioannis Soranidis\orcidlink{0000-0002-8652-9874}}
\email{ioannis.soranidis@westlake.edu.cn}
\affiliation{Institute for Theoretical Sciences and Departments of Physics and Astronomy, Westlake University, Hangzhou 310030, China}
\affiliation{Institute of Natural Sciences, Westlake Institute for Advanced Study, Hangzhou 310024, China}

\begin{abstract}
	\vspace*{0.2cm}	
	
We derive a class of vacuum regular black holes inspired by effective loop quantum gravity dynamics and extend the construction to asymptotically anti-de Sitter spacetimes. The derivation is based on a deparameterized Lemaître--Tolman--Bondi formulation, where an auxiliary dust field is introduced only to define an internal time and does not act as a matter source. In spherical symmetry, the dynamics reduces to a set of independent radial shells, giving rise to a factorized shell Hamiltonian and to a Birkhoff-type property: for a fixed reconstruction function and cosmological constant, the static geometry is uniquely determined by the mass. Within this framework, we establish the conditions for curvature regularity at the center and construct several regular black hole models with de Sitter cores and corresponding models with anti-de Sitter cores. We then study their thermodynamics in the extended phase space, using the Hawking--Page transition to compare the de Sitter- and anti-de Sitter-core branches and show that their quantitative differences arise from the deformation of the physical outer-horizon branch rather than from regularity alone.

\end{abstract}
\maketitle

\section{introduction}

Classical gravitational collapse exposes one of the sharpest limitations of general relativity (GR): under broad conditions the theory predicts black hole interiors ending at curvature singularities, where the classical spacetime description loses predictive power \cite{H:76}. This motivates the search for compact object geometries that retain the successful exterior features of black holes while avoiding a singular center. Broadly, such proposals either modify the interior geometry, for example by replacing the Schwarzschild singularity with a regular core, or modify the global structure so that the object is horizonless \cite{CP:19,BCNS:19,M:23}.

Regular black hole (RBH) solutions can be constructed through a variety of mechanisms, either by introducing suitable matter sources or by modifying the gravitational dynamics. A well-known example of the former arises in nonlinear electrodynamics, where magnetic monopole configurations can regularize the spacetime within GR \cite{pt:69,bmss:79,ag:98,ag:99a,ag:99b,b:00,b:01,bh:02,d:04,bv:14,fw:16,b:17,tsa:18,b:23}. More recently, considerable attention has been devoted to vacuum regular black holes, where regularity is achieved without introducing additional matter sources. Such solutions have been obtained, for instance, in quasitopological gravity: in dimensions $D\geqslant 5$ for polynomial theories \cite{BCH:25,BCHM:25a,BCHM:25b,BCHMC:25,HKMS:25,FKSZ:25}, and in four dimensions through non-polynomial constructions \cite{BCR:26}\footnote{More generally, the relation between integrable two-dimensional Horndeski/dilaton theories and $D\geqslant 4$ gravitational theories admitting Birkhoff-type static vacuum solutions has recently been clarified in Ref.~\cite{B:26}. A general analysis of the thermodynamics of quasitopological gravity with flat or AdS asymptotics can also be found in Ref.~\cite{B:26-thermo}.}. Analogous four-dimensional vacuum configurations have also been derived in related frameworks \cite{BenAchour:2018khr,Han:2020uhb,Han:2022rsx,F:25,BCHM:26,GLRSW:25,GLSW:25,GL:25,LS:26}.

A complementary approach, more directly motivated by quantum gravity, is provided by loop quantum gravity-inspired effective models. In this setting, quantum corrections are introduced through modifications of the scalar constraint, leading to an effective canonical dynamics. Spherical symmetry is especially useful for implementing this idea: since gravitational waves are absent, the dynamics can be organized in terms of radial shells. This makes it possible to construct vacuum models in which the shells evolve independently once the relevant conserved quantities are fixed, and whose static sector can admit unique black hole solutions \cite{GL:25}.

The construction considered in this work is based on a deparameterized formulation, where an auxiliary dust field is introduced only to define a relational time. The dust does not act as physical matter sourcing the geometry. Instead, the vacuum sector is obtained by setting its energy density to zero, so that the resulting configurations should be interpreted as polymerized vacuum solutions. Effective covariant descriptions of related modified dynamics can also be formulated within mimetic gravity-type theories \cite{Chamseddine:2016uef,Langlois:2017hdf,BenAchour:2017ivq,Han:2020uhb,Han:2022rsx,GLSW:25,GL:25}, providing a useful bridge between the canonical and spacetime-based descriptions. Explicit realizations of these constructions, together with preliminary analyses of their stability, can be found in Refs.~\cite{GL:25,LS:26,LS:26-mass-inflation}.

A key feature of the vacuum sector obtained in this way is its Birkhoff-type behaviour. Once the reconstruction function is fixed, the static spherically symmetric geometry is uniquely determined by the mass parameter. This is a nontrivial restriction on the allowed polymerizations. For a generic choice, the reconstructed metric may depend on additional shell data, in which case the vacuum geometry would not be specified by the mass alone. The Birkhoff-type condition removes this ambiguity by requiring the curvature variable entering the static reconstruction to depend only on the conserved vacuum shell Hamiltonian \cite{GL:25,LS:26}. As a result, the metric function is determined by a single polymerization function, together with the mass. In the classical limit, the standard Schwarzschild geometry is recovered.

In this work, we extend the polymerized vacuum construction to include a cosmological constant. This extension is important for two complementary reasons. First, it allows the resulting regular geometries to be embedded in spacetimes with nonzero cosmological constant, thereby making the construction more general. While the observed large-scale acceleration of the universe is usually described by an effective positive cosmological constant, negative cosmological constant backgrounds play a central role in high-energy theory, most notably in the context of gauge--gravity duality \cite{M:99}. It is therefore natural to ask whether the polymerized vacuum framework admits a consistent extension to asymptotically anti-de Sitter (AdS) spacetimes.
	
	Second, the AdS case provides a particularly useful setting for black hole thermodynamics. In contrast with asymptotically flat spacetimes, the AdS boundary effectively confines radiation, allowing black holes to be studied in a canonical ensemble and making possible the Hawking--Page transition between thermal AdS and a large black hole. This provides a controlled framework in which to investigate how polymerization and singularity resolution affect the thermodynamic phase structure \cite{KMT:17}.
	
	Another important feature of the present construction is that it naturally accommodates regular black holes with both de Sitter-like and anti-de Sitter-like cores. Most regular models replace the central singularity with a dS-like core, as in the Bardeen, Hayward, and Dymnikova geometries. Regularity, however, does not by itself require a dS interior. A nonsingular geometry may instead possess an AdS-like core, which, from an effective stress-tensor perspective, can be interpreted as being supported by a negative energy density near the origin. Such behavior can arise from quantum backreaction effects in highly compact objects \cite{ALNV:25} and has also appeared in other settings, although typically through the inclusion of additional matter fields, as in certain modified-gravity and nonlinear-electrodynamics constructions \cite{HKMS:25}.
	
	In the polymerized vacuum setting considered here, both possibilities can be realized through suitable choices of the polymerization function. This makes the framework well suited for comparing how different vacuum resolutions of the singularity affect black hole thermodynamics. To the best of our knowledge, such a comparison has not previously been carried out for polymerized vacuum regular black holes in AdS spacetime.

The paper is organized as follows. In Sec.~\ref{sec:LTB}, we review the Lemaître–Tolman–Bondi (LTB) shell formulation, introduce the cosmological constant, and derive the polymerized vacuum reconstruction. We also discuss how regularity at the center is encoded in the properties of the polymerization function, and present the representative models used throughout the paper. In Sec.~\ref{sec:thermo}, we derive the temperature, entropy, and extended first law for polymerized AdS black holes. In Sec.~\ref{sec:PT}, we study the phase structure and equation of state. We conclude in Sec.~\ref{sec:conclusions} with a discussion of the main results and possible extensions.

\section{Polymerized vacuum solutions with a cosmological constant} \label{sec:LTB}

We first summarize the ingredients of the LTB framework that are needed for the present analysis. The purpose of this review is not to reproduce the full derivation, but to isolate the structural properties that will later be used to build effective quantum gravity-inspired modifications, construct RBH geometries, and study their thermodynamic properties. Detailed treatments of the LTB reduction and its effective extensions can be found in Refs.~\cite{LLB:07,GLRSW:25,GLSW:25,LS:26}.

\subsection{Classical LTB model}\label{sec:CLTB}

We consider a spherically symmetric gravitational system coupled to pressureless dust. Using Ashtekar--Barbero variables adapted to spherical symmetry, the gravitational phase space is described by the canonical pairs $(E^{x},K_{x})$ and $(E^{\phi},K_{\phi})$. Here $E^{I}$ denote the densitized triad variables, while $K_{I}$ encode the corresponding components of the extrinsic curvature\footnote{On a hypersurface $t=\mathrm{const}$ with induced metric $h_{ab}$, one may introduce triads $e^{i}_{a}$ satisfying $h_{ab}=e^{i}_{a}e^{j}_{b}\delta_{ij}$. This description carries a local $\mathrm{SO}(3)$ gauge freedom, inherited from the local Lorentz symmetry of the tetrad formulation. The extrinsic curvature can similarly be written in triadic form through $k^{a}_{i}e^{i}_{b}=k_{ab}$. The inverse densitized triads are defined by $E^{a}_{i}=\frac{1}{2}\epsilon_{ijk}\epsilon^{abc}e^{j}_{b}e^{k}_{c}$ \cite{Rov:10,RV:15,T:10}.}. The most general spherically symmetric line element in these variables is
\begin{align}
	ds^2=-N^2dt^2+\frac{(E^{\phi})^2}{E^x}(dx+N^xdt)^2+E^x d\Omega^2 ,
\end{align}
where $N(t,x)$ is the lapse and $N^x(t,x)$ is the radial shift. The dust sector is taken to be pressureless, so its stress tensor has the standard form
\begin{align}
	T_{\mu\nu}=\varrho u_\mu u_\nu ,
\end{align}
with $\varrho$ the dust energy density and $u^\mu$ the dust four-velocity. We now make the coordinate choices appropriate to the LTB description. The radial coordinate is chosen to label the dust worldlines, so that the dust flow is aligned with curves of constant $x$. In these coordinates the four-velocity may be written as
\begin{align}
	u^\mu=\delta^\mu_t .
\end{align}
We then work in the comoving gauge,
\begin{align}
	N=1,
	\quad
	N^x=0,
\end{align}
for which the line element becomes
\begin{align}
	ds^2=-dt^2+\frac{(E^{\phi})^2}{E^x}dx^2+E^x d\Omega^2 .
	\label{eq:metric}
\end{align}
In this gauge, $t$ measures the proper time along the radial dust geodesics, while $x$ labels the individual shells. Thus the coordinates are adapted to a freely falling reference congruence. This description is assumed to hold only in regions where the congruence remains regular, so that the dust time and shell label define a valid coordinate system. If shell crossings or caustics occur, they should be understood as a failure of this reference-clock chart. They do not represent an instability of physical dust matter in the vacuum sector studied below, since there the dust density is set to zero and the dust field is retained only as a relational clock.

The intrinsic curvature of the spatial slices is characterized, in spherical symmetry, by two independent sectional curvatures. We define
\begin{align}
	\mathcal{R}_1(t,x)\equiv-\tensor{R}{^{\theta\phi}_{\theta\phi}}, \quad \mathcal{R}_2(t,x)\equiv\tensor{R}{^{x\theta}_{x\theta}}.
\end{align}

With this convention, $\mathcal{R}_1$ is associated with the angular two-spheres, while $\mathcal{R}_2$ corresponds to the radial-angular sectional curvature. These quantities provide a convenient geometric parametrization of the intrinsic spatial curvature and will be used both in the classical LTB reduction and in the effective modifications considered below.
	The LTB line element can be written as
	\begin{align}
		ds^2=-dt^2+\frac{\big(R'(t,x)\big)^2}{1+\mathcal{E}(x)}dx^2+R(t,x)^2d\Omega^2 .
		\label{eq:LTB:metric}
	\end{align}
	Comparing this metric with Eq.~\eqref{eq:metric} fixes the relation between the triad variables $E^x$ and $E^\phi$. Since $E^x=R^2$, consistency of the radial components gives the LTB condition
	\begin{align}
		(E^x)'=2\sqrt{1+\mathcal{E}(x)}\,E^\phi .
	\end{align}
	
	The function $\mathcal{E}(x)$ arises as an integration function of the LTB system and is conserved along the dust flow. It also has a direct geometrical meaning. Using the definitions of the intrinsic curvature variables introduced above, one obtains
	\begin{align}
		\mathcal{R}_1(t,x)=\frac{\mathcal{E}(x)}{E^x},
		\quad
		\mathcal{R}_2(t,x)=-\frac{\mathcal{E}'(x)}{(E^x)'} .
	\end{align}
	Thus $\mathcal{E}(x)$ controls the intrinsic spatial curvature of the LTB slices.
	
	Physically, $\mathcal{E}(x)$ is interpreted as the specific energy of the shell labeled by \(x\). Its sign distinguishes the usual three classes of LTB evolution:
	\vspace*{0.2cm}
	\begin{itemize}[noitemsep,nolistsep,leftmargin=12pt]
		\item Marginally bound shells: \(\mathcal{E}(x)=0\).  
		These shells have zero total energy and reach infinity with vanishing velocity.
		\vspace*{0.1cm}
		
		\item Unbound shells: \(\mathcal{E}(x)>0\).  
		These correspond to positive-energy shells.
		\vspace*{0.1cm}
		
		\item Bound shells: \(-1<\mathcal{E}(x)<0\).  
		These have negative total energy and do not escape to infinity.
		\vspace*{0.2cm}
	\end{itemize}
	LTB spacetimes are determined by two independent functions of the radial shell label $x$. The first is the energy function $\mathcal{E}(x)$, introduced above, while the second is the Misner--Sharp mass $M(x)$, which measures the total energy contained within the shell $x$. A crucial feature of the pressureless dust system in comoving coordinates is that both of these functions are conserved in time. In particular,
	\begin{align}
		\partial_t\mathcal{E}=0, \quad\partial_t M=0 .
	\end{align}
	These relations are not additional assumptions, but follow from Einstein's equations for pressureless dust. The function $\mathcal{E}(x)$ therefore labels the conserved energy class of each shell, while $M(x)$ gives the conserved mass contained inside it. Since the coordinates are comoving with the dust flow, there is no exchange of energy between neighboring shells. Each shell consequently evolves as an independent subsystem with fixed values of $\mathcal{E}(x)$ and $M(x)$. The independent evolution of the comoving shells is encoded in the radial equation
	\begin{align}
		\dot{R}^2=\frac{2M(x)}{R(t,x)}+\mathcal{E}(x).
	\end{align}
	For a fixed shell label $x$, the quantities $M(x)$ and $\mathcal{E}(x)$ act as constants of motion, so the equation describes the dynamics of that shell without reference to neighboring shells. This is the essential decoupling property of LTB dynamics. Before generalizing the construction beyond GR, it is useful to recast this shell-wise structure in Hamiltonian language. To this end, for each shell we introduce
	\begin{align}
		b(x)=\frac{K_{\phi}}{\sqrt{E^x}},
		\quad
		v(x)=(E^x)^{3/2}=R^3 ,
	\end{align}
	with Poisson bracket
	\begin{align}
		\{b(x),v(y)\}=-\frac{3}{2}\delta(x-y).
	\end{align}
	After imposing the LTB condition, the dynamics can be organized in terms of independent shell variables. The evolution generated by the shell Hamiltonian $H_s(b,v;\mathcal E)$ is therefore
	\begin{align}
		\dot v=\{v,H_s\}=\frac{3}{2}\partial_b H_s,
		\quad
		\dot b=	\{b,H_s\}	=-\frac{3}{2}\partial_v H_s .
		\label{eq:Poisson}
	\end{align}
	The compatibility of these variables with the original triad and extrinsic curvature variables follows once the LTB condition, together with its radial derivative, is imposed. Using these relations, the shell Hamiltonian takes the factorized form
	\begin{align}
		H_s(b,v;\mathcal E)=v H_0(b,\mathcal R_1),\label{eq:H-factor}
	\end{align}
	where \(v=R^3\) in four spacetime dimensions and $\mathcal R_1$ is the intrinsic-curvature variable determined by the LTB energy function \cite{GLRSW:25,GLSW:25,GL:25,LS:26}.
	
	The total scalar constraint is
	\begin{align}
		\mathcal C^{\rm tot}=\mathcal C^{\rm grav}+\mathcal C^{\rm dust}\approx 0 .
	\end{align}
	In the LTB sector, the gravitational part reduces to a radial derivative of the shell Hamiltonian. With the normalization used here, this can be written as
	\begin{align}
		\mathcal C^{\rm grav}=-\frac{2}{\sqrt{1+\mathcal E(x)}}\partial_x H_s(b,v;\mathcal E),\label{eq:Cgrav-shell}
	\end{align}
	up to the common angular factor. The precise overall normalization is conventional and will not affect the vacuum condition below.
	
	On the vacuum branch considered in this work, the dust field is retained only as a relational clock and does not source the geometry. Equivalently, the dust density is set to zero, so that
	\begin{align}
		\mathcal C^{\rm dust}=0 .
	\end{align}
	The total scalar constraint then implies
	\begin{align}
		-\frac{2}{\sqrt{1+\mathcal E(x)}}\partial_x H_s(b,v;\mathcal E)\approx 0 .
	\end{align}
	Therefore, we have that 
	\begin{align}
		\partial_x H_s(b,v;\mathcal E)\approx0 .
	\end{align}
	Thus, in the vacuum sector, the shell Hamiltonian is independent of the shell label \(x\). We may therefore write
	\begin{align}
		H_s(b,v;\mathcal E)=2m,
	\end{align}
	where $m$ is the mass integration constant. Using the factorization in Eq.~\eqref{eq:H-factor}, this is equivalently
	\begin{align}
		H_0(b,\mathcal R_1)=\frac{2m}{R^{3}}.
	\end{align}
	In the GR case without a cosmological constant, one finds explicitly
	\begin{align}
		\mathcal{C}^{\mathrm{grav}}=-\frac{2}{\sqrt{1+\mathcal{E}}}\partial_x\left[R\left(\dot R^2-\mathcal{E}\right)\right] .
	\end{align}
	Therefore the corresponding shell Hamiltonian may be identified, up to the same overall normalization convention, as
	\begin{align}
		H^{\Lambda=0}_{s}=R\left(\dot R^2-\mathcal{E}\right).
	\end{align}
	Let us now include a cosmological constant $\Lambda$. In four spacetime dimensions, the cosmological-constant contribution to the scalar constraint is proportional to the spatial volume density. With our conventions this contribution takes the form
	\begin{align}
		\mathcal{C}^{\Lambda}=2\Lambda E^{\phi}\sqrt{E^x}.
	\end{align}
	
	The LTB metric is given by
	\begin{align}
		ds^2=-dt^2+\frac{(R'(t,x))^2}{1+\mathcal{E}(x)}dx^2+R(t,x)^2d\Omega^2. \label{eq:LTB}
	\end{align}
Demanding that the metric above coincide with that of Eq.~\eqref{eq:metric} necessarily imposes a condition on the functions $E^{\phi}$ and $E^{x}$. We designate this as the LTB condition, and it is readily verified to be
	\begin{align}
		(E^x)'=2E^\phi\sqrt{1+\mathcal{E}},
	\end{align}
	and thus we obtain that 
	\begin{align}
		E^\phi \sqrt{E^x}=\frac{(E^x)'\sqrt{E^x}}{2\sqrt{1+\mathcal{E}}}=\frac{1}{\sqrt{1+\mathcal{E}}}
		\partial_x\left[\frac{(E^x)^{3/2}}{3}\right].
	\end{align}
	Hence the cosmological-constant contribution can be written as
	\begin{align}
		\mathcal{C}^{\Lambda}=\frac{1}{\sqrt{1+\mathcal{E}}}\partial_x	\left(\frac{2\Lambda (E^x)^{3/2}}{3}
		\right)=\frac{1}{\sqrt{1+\mathcal{E}}}\partial_x
		\left(\frac{2\Lambda R^3}{3}\right).
	\end{align}
	Here $\Lambda$ denotes the bare cosmological constant entering the reduced scalar constraint. Since the thermodynamic analysis below concerns black holes in AdS spacetime, we now specialize to\footnote{The restriction to negative $\Lambda$ is made only for the purposes of the AdS thermodynamic analysis. The construction itself does not rely on this sign choice and applies equally to positive bare cosmological constant, leading to the corresponding dS asymptotics whenever the chosen branch admits them.}
	\begin{align}
		\Lambda=-\frac{3}{L^2},
	\end{align}
	where the parameter $L$ is the bare AdS length scale of the classical theory. 
	
	Combining the GR contribution with the cosmological-constant contribution,
	the vacuum scalar constraint becomes
	\begin{align}
		\mathcal{C}^{\mathrm{grav}}+\mathcal{C}^{\Lambda}\approx 0 .
	\end{align}
	Using the expressions above, this may be written as
	\begin{align}
		-\frac{2}{\sqrt{1+\mathcal{E}}}\partial_x\left[R\left(\dot R^2-\mathcal{E}\right)+\frac{R^3}{L^2}	\right]\approx 0 ,
	\end{align}
	up to the same overall normalization convention used for the scalar
	constraint. Therefore the conserved shell Hamiltonian in the AdS case is
	\begin{align}
		H_s^{\mathrm{AdS}}=R\left(\dot R^2-\mathcal{E}\right)+\frac{R^3}{L^2}.
	\end{align}
	The vacuum condition implies
	\begin{align}
		\partial_x H_s^{\mathrm{AdS}}\approx0 .
	\end{align}
	Thus the shell Hamiltonian is constant along the radial direction. Denoting
	this integration constant by $2m$, we have
	\begin{align}
		H_s^{\mathrm{AdS}}=2m .
	\end{align}
	Explicitly,
	\begin{align}
		R\left(\dot R^2-\mathcal{E}\right)+\frac{R^3}{L^2}=2m .
	\end{align}
	Solving for \(\dot R^2\), one obtains
	\begin{align}
		\dot R^2=\mathcal{E}-\frac{R^2}{L^2}+\frac{2m}{R}.
	\end{align}
	The term $-R^2/L^2$ dominates at large radius and confines the shell motion for finite $\mathcal E$ and fixed $m$. This is the shell-level sense in which AdS acts as a gravitational box, and it is precisely the asymptotic structure relevant for black hole thermodynamics.
	
	Equivalently, if one writes the factorized shell Hamiltonian as
	\begin{align}
		H_s^{\mathrm{AdS}}=R^3H_0^{\mathrm{AdS}},
	\end{align}
	then
	\begin{align}
		H_0^{\mathrm{AdS}}=H_0^{(0)}+\frac{1}{L^2}.
	\end{align}
	The vacuum condition $H_s^{\mathrm{AdS}}=2m$ gives
	\begin{align}
		H_0^{\mathrm{AdS}}=\frac{2m}{R^3},
	\end{align}
	and therefore
	\begin{align}
		H_0^{(0)}=-\frac{1}{L^2}+\frac{2m}{R^3}.
	\end{align}
	As we will show below, this is the combination that naturally appears in the polymerized construction.
	
	\subsection{Polymerized case}\label{sec:PLTB}
	
	An appropriate generalization can be carried out in the case of modified theories, under the minimal assumption that compatibility with LTB-type solutions is preserved. By this, we mean that there exists a subset of solutions of LTB type for which a decoupling occurs, allowing each shell, labeled by $x$, to be described independently. The overall procedure remains essentially the same; however, the modified dynamics are now governed by a new gravitational part of the scalar constraint, which must reduce to the GR limit in the appropriate parameter regime \cite{GLSW:25,GLRSW:25,GL:25,LS:26}.
	
	In the dust model, a vacuum region is characterized by the vanishing of the dust energy density. More precisely, by vacuum we mean that the dust density that couples to the deparameterized system is set to zero, so that the dust does not source the metric. From the Hamiltonian perspective, this corresponds to a region in which the shell Hamiltonian, and hence the Misner--Sharp mass $M(x)$, is independent of the shell label $x$, so that $M(x)=\mathrm{const.}$. The energy function $\mathcal{E}(x)$ is not fixed by the vacuum condition itself; it is conserved along each shell but may vary from shell to shell, encoding the choice of LTB reference congruence/slicing.
	
	We should stress that, in the context of a modified theory characterized by a modified scalar constraint, the term ``vacuum solution'' refers to a modified vacuum configuration. In our framework, this is commonly referred to as a polymerized vacuum, reflecting its relation to loop quantum gravity. The reference dust field is nevertheless retained in the covariant action only as the relational clock and foliation variable needed to write the deparameterized Hamiltonian in four-dimensional form; on the vacuum branch it carries no independent physical degree of freedom and should not be interpreted as an additional matter source.
	
	An important issue to clarify is how this construction yields unique static solutions. The procedure proceeds as follows: we start from the LTB metric in Eq.~\eqref{eq:LTB} and perform a coordinate transformation to Painlevé–Gullstrand coordinates,
	\begin{align}
	ds^2=-dt^2+(1+\mathcal{E})^{-1}(dr-\dot R dt)^2+r^2d\Omega^2,
	\end{align}
	with
	\begin{align}
		dr=\dot{R}dt+R'dx.
	\end{align}
	This leads to
	\begin{align}
		ds^2=-dt^2+\frac{1}{1+\mathcal{E}}(dr-\dot{R}dt)^2+r^2d\Omega^2.
	\end{align}
	In these coordinates, we can immediately identify the shift vector as
	\begin{align}
		N^{r}(t,r)=-\dot{R}, \label{eq:Nr-dot}
	\end{align}
	and we should note that the conserved quantities $\mathcal{E}(x)$ and $M(x)$ are now expressed as functions of $(t,r)$. 
	
	Our next step is to transform this metric to Schwarzschild coordinates $(t_s,r)$ in order to determine whether vacuum solutions exist. To achieve this, we must eliminate the cross terms, which requires imposing the condition $g_{t_s r}=0$ after the coordinate transformation. The new time coordinate is taken to be a function $t_s=t_s(t,r)$. Combining these requirements leads to the condition
	\begin{align}
		\partial_{r}t_s=\frac{N^{r}}{1+\mathcal{E}-(N^r)^2}\partial_{t}t_s.
	\end{align}
	This coordinate transformation leads to the metric
	\begin{align}
		ds^2=-\frac{1+\mathcal{E}-(N^{r})^2}{(1+\mathcal{E})(\partial_tt_s)^2}dt^2_s+\frac{1}{1+\mathcal{E}-(N^{r})^2}dr^2+r^2d\Omega^2.
	\end{align}
	We immediately observe that, at this stage, the metric is not uniquely determined, since it still contains the arbitrary LTB energy function $\mathcal{E}$. A nonconstant $\mathcal{E}(x)$ should be understood as part of the choice of reference congruence/slicing, rather than as an obstruction to the existence of a timelike Killing vector in the underlying vacuum spacetime.
	For the static reconstruction pursued here, we therefore choose a shell-independent energy representative, which removes this slicing ambiguity and allows the metric coefficients to depend only on $(M,r)$. To achieve this, we introduce the following ansatz
	\begin{align}
		(N^{r})^2=\mathcal{E}+1-f(M,r).
	\end{align}
	Using this and imposing the condition $(1+\mathcal{E})(\partial_{t}t_s)^2=1$ then leads to a metric of the form
	\begin{align}
		ds^2=-f(M,r)dt^2_s+\frac{dr^2}{f(M,r)}+r^2d\Omega^2.
	\end{align}
	This metric is static, admitting the Killing vector $\partial_{t_s}$, and is unique for a given $M=m$.
	
	We now proceed by combining Eq.~\eqref{eq:Nr-dot} with Eq.~\eqref{eq:Poisson} for $\dot{v}$, from which we obtain
	\begin{align}
		N^{r}=-\frac{1}{2r^2}\partial_{b}H^{\mathrm{AdS}}_{s},
	\end{align}
	which leads to the partial differential equation
	\begin{align}
		(\partial_{b}H^{\mathrm{AdS}}_s)^2=4r^4\left(1+\mathcal{E}-f(M,r)\right).
	\end{align}
	From this point onward we specialize to the AdS case discussed above. The
	relevant shell Hamiltonian is therefore $H_s^{\mathrm{AdS}}$, with
	\begin{align}
		H_s^{\mathrm{AdS}}=R^3H_0^{\mathrm{AdS}},\quad	H_0^{\mathrm{AdS}}=H_0^{(0)}+\frac{1}{L^2}.
	\end{align}
	Since the AdS contribution is independent of the shell momentum variable
	$b$, one has
	\begin{align}
		\partial_b H_0^{\mathrm{AdS}}=\partial_b H_0^{(0)} .
	\end{align}
	Using the factorization
	$H_s^{\mathrm{AdS}}=r^3H_0^{\mathrm{AdS}}$, this becomes
	\begin{align}
		\left(\partial_b H_0^{(0)}\right)^2=4\left(\frac{1-f(M,r)}{r^2}	+\mathcal R_1\right),	\label{eq:static-reconstruction-pde}
	\end{align}
	where we used $\mathcal R_1=\mathcal E/r^2$. Equivalently,
	\begin{align}
		\psi:=\frac{1-f(M,r)}{r^2}=\frac14\left(\partial_b H_0^{(0)}\right)^2-\mathcal R_1.\label{eq:psi-reconstruction}
	\end{align}
	The vacuum condition is the conservation of the full AdS shell Hamiltonian,
	\begin{align}
		H_s^{\mathrm{AdS}}=2m .
	\end{align}
	Therefore
	\begin{align}
		H_0^{\mathrm{AdS}}=	\frac{2m}{R^3}, \quad	H_0^{(0)}=-\frac{1}{L^2}+\frac{2m}{R^3}.\label{eq:H0-vacuum-AdS}
	\end{align}
	
	For a generic polymerization, the right-hand side of Eq.~\eqref{eq:psi-reconstruction} depends independently on $b$ and $\mathcal R_1$. The Birkhoff-type polymerized vacuum sector considered here is defined by requiring that this combination depend on the local shell	data only through $H_0^{(0)}$. Equivalently,
	\begin{align}
		d\psi\wedge dH_0^{(0)}=0 .
	\end{align}
	The condition above is the mathematical expression of the Birkhoff-type requirement in the reduced shell language. For a generic polymerization, the Painleve--Gullstrand reconstruction gives a curvature variable $\psi$ that may depend separately on the local shell variables $b$ and $\mathcal R_1$. In that case, the value of the conserved shell Hamiltonian would not uniquely determine the metric function: different points in the local shell phase space could have the same value of $H_0^{(0)}$ but different values of $\psi$. The vacuum geometry would then contain additional shell data beyond the mass integration constant, and the Birkhoff-type property would be lost. The condition above removes this ambiguity. It states that $\psi$ is functionally dependent on $H_0^{(0)}$, or equivalently that $\psi$ is constant along the level sets of $H_0^{(0)}$. Therefore, once the vacuum condition fixes $H_s^{\rm AdS}=2m$, the metric function is determined only by $m$, the fixed AdS parameter, and the chosen polymerization function. 
	
	When this condition is satisfied, there exists a single-variable function
	$\tilde{f}$ such that
	\begin{align}
		\psi=\tilde{f}\left(H_0^{(0)}\right).\label{ft:H0}
	\end{align}
	Combining this with Eq.~\eqref{eq:H0-vacuum-AdS}, the static metric
	function becomes
	\begin{align}
		f(r)=1-r^2 \tilde{f}\left(-\frac{1}{L^2}+\frac{2Gm}{r^3}\right).
		\label{eq:f-LTB}
	\end{align}
	This equation is the central reconstruction formula for the asymptotically
	AdS polymerized vacuum geometries studied in the rest of this paper. The function $\tilde{f}$ specifies the effective polymerized model.
	
	For a nonlinear \(\tilde f\), the physical asymptotic AdS scale need not coincide with the bare parameter \(L\). At large radius,
	\begin{align}
		f(r)\simeq	1-r^2\tilde{f} \left(-\frac{1}{L^2}\right)-\tilde{f}'\left(-\frac{1}{L^2}\right)\frac{2Gm}{r}+\cdots,
	\end{align}
	so the effective AdS length and the correction to Newton's constant are determined by
	\begin{align}
		\tilde{f}\left(-\frac{1}{L^2}\right)=-\frac{1}{L_{\rm eff}^2}, \quad G_{\mathrm{eff}}=G\tilde{f}'\left(-\frac{1}{L^2}\right)\label{eq:GL-eff}
	\end{align}
	In the GR limit $\tilde{f}(X)=X$, one has $L_{\rm eff}=L$ and $G_{\mathrm{eff}}=G$, and the usual Schwarzschild--AdS form is recovered. For a general nonlinear $\tilde{f}$, the departure from GR should be understood as an effective modification of the vacuum gravitational dynamics in the reduced shell description. Explicit constructions of polymerized spherically symmetric constraints, their compatible LTB reductions, and related covariant realizations have been developed in Refs.~\cite{Han:2022rsx,GLRSW:25,GL:25,LS:26}. Through Eq.~\eqref{ft:H0}, $\tilde{f}$ changes the relation between the reconstructed curvature variable and the conserved gravitational shell Hamiltonian, thereby modifying the static vacuum geometry and the effective couplings appearing in Eq.~\eqref{eq:GL-eff}. No physical matter source is introduced: as discussed above, the dust field serves only as a relational clock and its energy density vanishes in the vacuum sector. At this level, $\tilde{f}$ specifies the reduced spherically symmetric vacuum dynamics but does not uniquely determine an off-shell covariant action; distinct covariant completions may reproduce the same reduced sector while differing in their additional degrees of freedom and in their dynamics away from this sector.
	
We are primarily interested in regular solutions. It is therefore important to identify the conditions that the polymerization function $\tilde{f}$ must satisfy in order for the geometry to remain regular throughout the spacetime, and in particular near the center. Since the areal radius coordinate degenerates at the center, we write the metric on the punctured domain $r>0$ and characterize regularity of the center through the limit $r\to0^+$. We say that the center is curvature-regular if all scalar polynomial curvature invariants constructed from the Riemann tensor remain bounded in this limit. 
	
\begin{figure*}[!htbp]
	\centering
	
	\subfloat[Models I and II]{
		\includegraphics[width=0.45\textwidth]{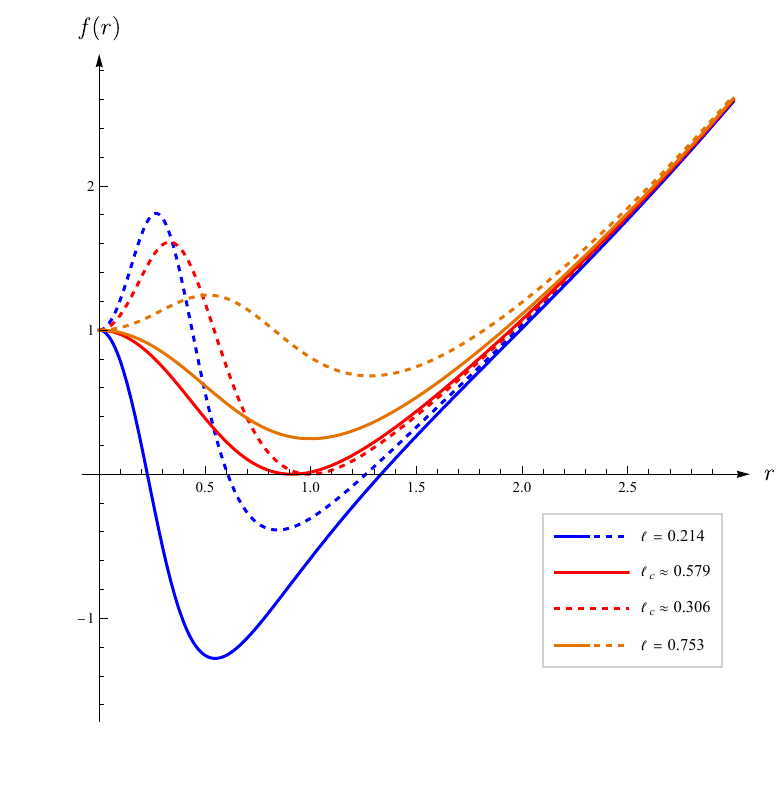}
		\label{fig:subfig:f12}
	}
	\hspace{0.03\textwidth}
	\subfloat[Models III and IV]{
		\includegraphics[width=0.45\textwidth]{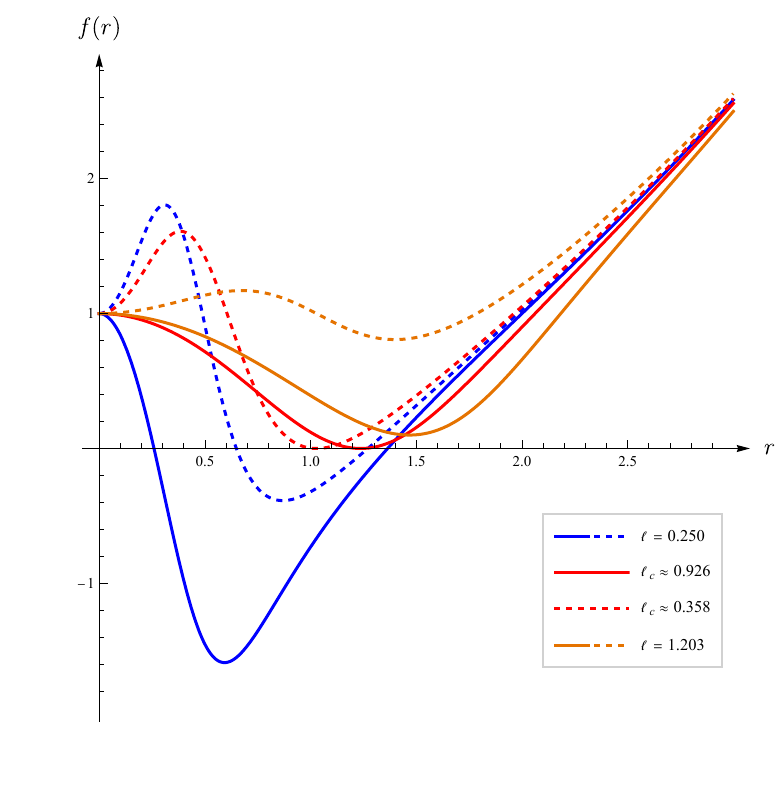}
		\label{fig:subfig:f34}
	}
	
	\vspace{0.35cm}
	
	\subfloat[Models V and VI]{
		\includegraphics[width=0.45\textwidth]{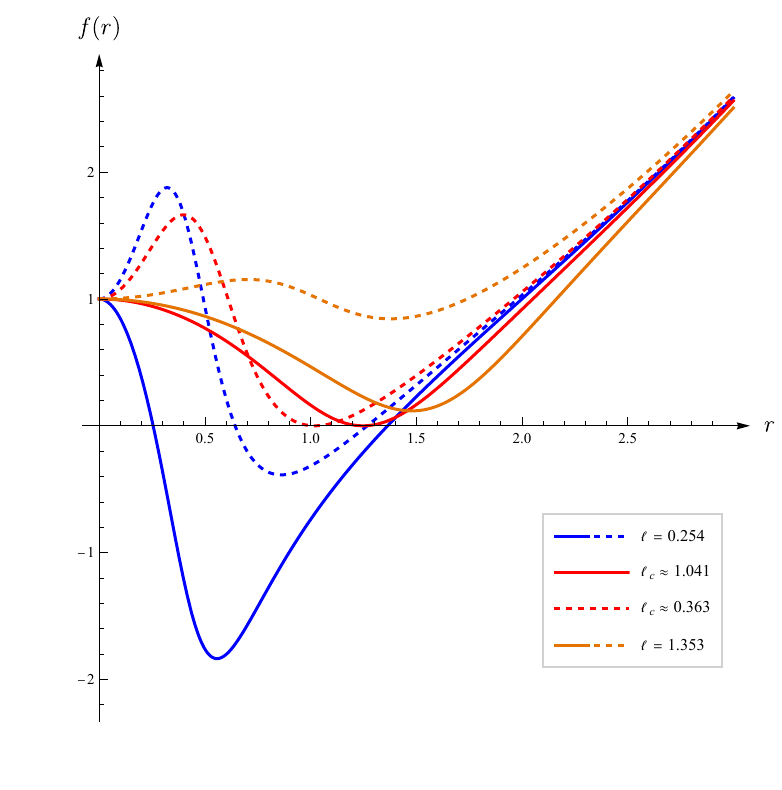}
		\label{fig:subfig:f56}
	}
	\hspace{0.03\textwidth}
	\subfloat[Models VII and VIII with $N=10$]{
		\includegraphics[width=0.45\textwidth]{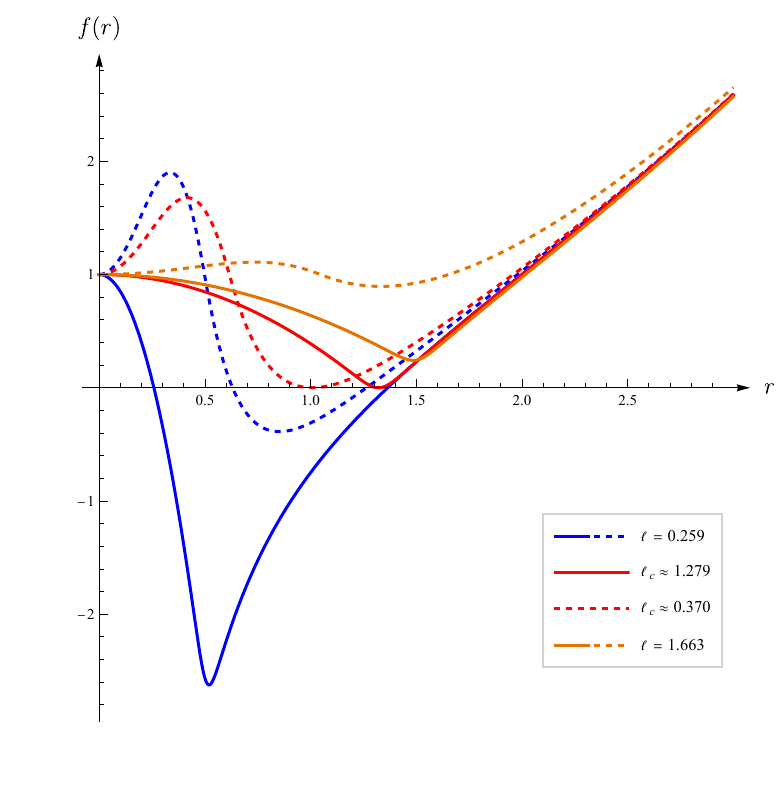}
		\label{fig:subfig:f78}
	}
	
	\caption{\textbf{Metric functions.}  Representative polymerization functions from Table~\ref{tab:polymerization-models} and their corresponding metric functions, with $u(r)=-1/L^2+2Gm/r^3$. Solid lines denote the odd models, which possess a dS core, while dashed lines denote the corresponding AdS core models. Three representative regimes are shown, depending on the value of the minimal length scale: an RBH with two horizons, corresponding to an outer and an inner horizon; an extremal black hole at the critical value $\ell=\ell_c$, where the two horizons merge; and a horizonless configuration for which no trapped region forms. The mass is fixed to $m=1$, while the effective Newton constant and effective AdS length are set to $G_{\mathrm{eff}}=1$ and $L_{\mathrm{eff}}=2$, respectively. The AdS core entries assume $\epsilon>0$; when $\epsilon=0$, these models reduce to their dS core counterparts. To make the distinction between dS core and AdS core behavior as transparent as possible, we set $\epsilon=1$.}
	\label{fig:f}
\end{figure*}

	\begin{theorem} \textbf{Curvature regularity criterion for polymerized static metrics:}
		\label{thm:polymerized-regularity}
		Consider a four-dimensional static, spherically symmetric metric written on the punctured areal-radius domain $\mathcal M^\circ=\{r>0\}$ as
		\begin{align}
			ds^2=
			-f(r)dt^2+\frac{dr^2}{f(r)}	+r^2d\Omega^2 , \quad f(r)=1-r^2\tilde f(u(r)).\label{eq:sss-metric}
		\end{align}
		Assume that $u\in C^2((0,\infty))$ and $\tilde f\in C^2(u((0,\infty)))$, and define
		\begin{align}
			F(r):=\tilde f(u(r)).
		\end{align}
		The center is curvature-regular, in the sense that all scalar polynomial curvature invariants constructed from the Riemann tensor remain bounded as $r\to0^+$, if and only if
		\begin{align}
			F(r),	\qquad rF'(r), \qquad r^2F''(r)	\label{eq:center-regularity-F}
		\end{align}
		remain bounded as $r\to0^+$. Equivalently, the criterion can be written as the boundedness of
		\begin{align}
			\tilde f(u(r)), &\qquad r\,\tilde f'(u(r))u'(r),	\\ &r^2
			\left[\tilde f''(u(r))\left(u'(r)\right)^2+\tilde f'(u(r))u''(r)\right], \label{eq:center-regularity-u}
		\end{align}
		in the same limit.
	\end{theorem}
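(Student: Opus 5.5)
The plan is to use the standard orthonormal-frame description of static, spherically symmetric metrics of the form $-f\,dt^2+dr^2/f+r^2d\Omega^2$. In the orthonormal frame $\{e_{\hat t},e_{\hat r},e_{\hat\theta},e_{\hat\phi}\}$ the Riemann tensor is algebraically determined by three independent sectional curvatures:
\begin{align}
K_1=-\tfrac12 f''(r),\quad K_2=-\frac{f'(r)}{2r},\quad K_3=\frac{1-f(r)}{r^2}.
\end{align}
Here $K_1$ is the $\hat t\hat r$ plane, $K_2$ is each of the $\hat t\hat\theta$, $\hat t\hat\phi$, $\hat r\hat\theta$, $\hat r\hat\phi$ planes (they coincide because $g_{tt}g_{rr}=-1$), and $K_3$ is the $\hat\theta\hat\phi$ plane.

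All frame components of $R_{\hat a\hat b\hat c\hat d}$ are linear combinations, with constant coefficients, of $K_1,K_2,K_3$. Every scalar polynomial invariant is then a polynomial in the frame components contracted with the constant Minkowski metric, hence a polynomial in $K_1,K_2,K_3$. This gives the ``if'' direction once the $K_i$ are shown to be bounded.

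For the converse, it suffices to exhibit invariants that control each $K_i$. A clean choice is the Kretschmann scalar
\begin{align}
R_{abcd}R^{abcd}=4K_1^2+16K_2^2+4K_3^2 .
\end{align}
This is a sum of squares with positive coefficients, so its boundedness forces each $K_i$ to be bounded. Thus curvature-regularity of the center is equivalent to boundedness of $K_1,K_2,K_3$ as $r\to0^+$.

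Next, I substitute $f=1-r^2F$. Using $u\in C^2$, $\tilde f\in C^2$ and the chain rule, $F\in C^2((0,\infty))$. Then
\begin{align}
K_3=F,\quad K_2=F+\tfrac12 rF',\quad K_1=F+2rF'+\tfrac12 r^2F''.
\end{align}
This map from $(F,rF',r^2F'')$ to $(K_3,K_2,K_1)$ is lower triangular with nonzero constant diagonal entries, so it is invertible with constant inverse:
\begin{align}
F=K_3,\quad rF'=2(K_2-K_3),\quad r^2F''=2K_1-8K_2+6K_3.
\end{align}
Hence the $K_i$ are bounded if and only if $F$, $rF'$ and $r^2F''$ are bounded, which is the criterion in Eq.~\eqref{eq:center-regularity-F}.

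The reformulation in Eq.~\eqref{eq:center-regularity-u} is then just the chain rule: $F'=\tilde f'(u)u'$ and $F''=\tilde f''(u)u'^2+\tilde f'(u)u''$.

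The main point requiring care is the claim that every scalar polynomial invariant is a polynomial in the $K_i$ with no stray $r$-dependent coefficients. I would justify it by noting that in an orthonormal frame the metric components are constants, and that the frame is smooth on $r>0$ even where $f$ changes sign. Near the center $f\to1$ whenever $F$ is bounded, so the frame is well defined there. Where $f<0$ one instead uses a frame with $\hat r$ timelike; the sectional-curvature formulas are unchanged up to signs, and the Kretschmann identity is the same. The fact that $K_2$ is shared by four planes, which is what makes $K_1,K_2,K_3$ the complete set of curvature data, should be checked once directly from the Christoffel symbols of the metric.
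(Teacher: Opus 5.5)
Your proposal is correct and follows the paper's proof essentially step for step. The steps match: the orthonormal-frame reduction to the three curvature scales $f''$, $f'/r$, $(1-f)/r^2$, the lower-triangular substitution $f=1-r^2F$ with stepwise inversion, and the chain rule for the $u$-form. One thing you add is useful: using the Kretschmann scalar $4K_1^2+16K_2^2+4K_3^2$ as a positive sum of squares for the ``only if'' direction tightens a step the paper's footnote merely asserts (the same expression appears in the paper's Corollary).
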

	
	\begin{proof}
		The proof is local and follows by examining the Riemann tensor in an orthonormal frame. For a metric of the form \eqref{eq:sss-metric}, the independent curvature scales are\footnote{
			These combinations are obtained by evaluating the Riemann tensor in the local orthonormal coframe
			$e^{\hat t}=\sqrt{f}\,dt$,
			$e^{\hat r}=dr/\sqrt{f}$,
			$e^{\hat\theta}=r\,d\theta$,
			and
			$e^{\hat\phi}=r\sin\theta\,d\phi$.
			In this frame, the Riemann tensor acts naturally on two-planes, or equivalently on bivectors. The relevant two-planes are
			$\hat t\wedge\hat r$,
			$\hat t\wedge\hat\theta$,
			$\hat t\wedge\hat\phi$,
			$\hat r\wedge\hat\theta$,
			$\hat r\wedge\hat\phi$,
			and
			$\hat\theta\wedge\hat\phi$.
			Spherical symmetry identifies the two angular directions, while the gauge$g_{tt}g_{rr}=-1$ implies that the time-angular and radial-angular sectional curvatures have the same magnitude and opposite sign. With the Riemann-sign convention used here, the nonzero independent components are
			\begin{align}
				R_{\hat t\hat r\hat t\hat r}
				&=
				-\frac{1}{2}f''(r),
				\\
				R_{\hat t\hat\theta\hat t\hat\theta}
				=
				R_{\hat t\hat\phi\hat t\hat\phi}
				&=
				-\frac{1}{2r}f'(r),
				\\
				R_{\hat r\hat\theta\hat r\hat\theta}
				=
				R_{\hat r\hat\phi\hat r\hat\phi}
				&=
				\frac{1}{2r}f'(r),
				\\
				R_{\hat\theta\hat\phi\hat\theta\hat\phi}
				&=
				\frac{1-f(r)}{r^2}.
			\end{align}
			Thus the curvature is controlled by the three magnitudes
			$f''(r)$, $f'(r)/r$, and $(1-f(r))/r^2$. They may be viewed as the eigenvalue-like entries of the Riemann tensor regarded as an operator on two-forms. Since scalar polynomial curvature invariants are contractions of products of orthonormal-frame Riemann components, boundedness of these three curvature scales is equivalent, for this metric, to boundedness of all scalar polynomial curvature invariants.
		}
		\begin{align}
			f''(r),
			\qquad
			\frac{f'(r)}{r},
			\qquad
			\frac{1-f(r)}{r^2}.
			\label{eq:independent-curvature-scales}
		\end{align}
		Therefore the center is curvature-regular if and only if the three quantities in Eq.~\eqref{eq:independent-curvature-scales} remain bounded as $r\to0^+$.
		
		Using the global form \eqref{eq:sss-metric}, we write
		\begin{align}
			f(r)=1-r^2F(r),
			\qquad
			F(r)=\tilde f(u(r)).
		\end{align}
		A direct differentiation gives
		\begin{align}
			\frac{1-f(r)}{r^2}
			&=
			F(r),
			\label{eq:curv-scale-1}
			\\
			\frac{f'(r)}{r}
			&=
			-2F(r)-rF'(r),
			\label{eq:curv-scale-2}
			\\
			f''(r)
			&=
			-2F(r)-4rF'(r)-r^2F''(r).
			\label{eq:curv-scale-3}
		\end{align}
		
		If $F(r)$, $rF'(r)$, and $r^2F''(r)$ are bounded as $r\to0^+$, then Eqs.~\eqref{eq:curv-scale-1}--\eqref{eq:curv-scale-3} show that all three curvature scales are bounded. This proves sufficiency.
		
		Conversely, suppose that the center is curvature-regular. Then Eq.~\eqref{eq:curv-scale-1} implies that $F(r)$ is bounded. Equation~\eqref{eq:curv-scale-2}, together with boundedness of $F(r)$, implies boundedness of $rF'(r)$. Finally, Eq.~\eqref{eq:curv-scale-3}, together with boundedness of $F(r)$ and $rF'(r)$, implies boundedness of $r^2F''(r)$. This proves necessity.
		
		Hence the center is curvature-regular if and only if the three quantities in Eq.~\eqref{eq:center-regularity-F} remain bounded as $r\to0^+$. Since
		\begin{align}
			F'(r)&=	\tilde f'(u(r))u'(r),\\
			F''(r)&=\tilde f''(u(r))\left(u'(r)\right)^2+\tilde f'(u(r))u''(r),
		\end{align}
		the equivalent form \eqref{eq:center-regularity-u} follows immediately. This completes the proof.
	\end{proof}
	
	\begin{corollary}\textbf{Limiting core geometry:}
		\label{cor:limiting-core-geometry}
		Assume the conditions of Theorem~\ref{thm:polymerized-regularity}. If the limits
		\begin{align}
			\lim_{r\to0^+}F(r)=F_0,
			\quad
			\lim_{r\to0^+}rF'(r)=F_1,
			\quad
			\lim_{r\to0^+}r^2F''(r)=F_2
		\end{align}
		exist, then the Ricci scalar and Kretschmann scalar admit the finite limits
		\begin{align}
			\lim_{r\to0^+}R&=12F_0+8F_1+F_2,\\
			\lim_{r\to0^+}K&=\left(2F_0+4F_1+F_2\right)^2+4\left(2F_0+F_1\right)^2+4F_0^2 .
		\end{align}
		In particular, if $F_1=F_2=0$, then
		\begin{align}
			f(r)=1-F_0r^2+\mathcal{O}(r^{2+a}),\qquad r\to0^+,
		\end{align}
		with $a>0$, and the center is locally maximally symmetric at leading order, with
		\begin{align}
			R\to12F_0,\qquad K\to24F_0^2 .
		\end{align}
		Thus $F_0>0$ corresponds to a dS core, $F_0<0$ to an AdS core, and $F_0=0$ to a Minkowski core at leading order.
	\end{corollary}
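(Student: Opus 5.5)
The plan is to reuse the orthonormal-frame curvature components from the footnote in the proof of Theorem~\ref{thm:polymerized-regularity} and rewrite them in terms of $F$ through Eqs.~\eqref{eq:curv-scale-1}--\eqref{eq:curv-scale-3}. Denote the three independent sectional curvatures by
\begin{align}
A&:=R_{\hat t\hat r\hat t\hat r}=-\tfrac12 f''=\tfrac12\left(2F+4rF'+r^2F''\right),\\
B&:=R_{\hat t\hat\theta\hat t\hat\theta}=-\tfrac{1}{2r}f'=\tfrac12\left(2F+rF'\right),\\
C&:=R_{\hat\theta\hat\phi\hat\theta\hat\phi}=F .
\end{align}
The footnote also gives $R_{\hat r\hat\theta\hat r\hat\theta}=-B$, and the $\hat\phi$ components equal the $\hat\theta$ ones.

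First I would express the two scalars algebraically in $A$, $B$, $C$. In the orthonormal frame every component of $R_{abcd}$ is fixed by these plane values, and each of the six coordinate two-planes contributes four equal squared terms to $R_{abcd}R^{abcd}$. This gives
\begin{align}
K=4\left(A^2+4B^2+C^2\right)=\left(2F+4rF'+r^2F''\right)^2+4\left(2F+rF'\right)^2+4F^2 .
\end{align}
Taking the trace of the Riemann tensor with the same signs gives
\begin{align}
R=-f''-\frac{4f'}{r}+\frac{2(1-f)}{r^2}=12F+8rF'+r^2F'' .
\end{align}
Both are polynomials in $F$, $rF'$ and $r^2F''$. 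Since these three quantities have the limits $F_0$, $F_1$, $F_2$ by hypothesis, continuity of polynomials yields the stated finite limits of $R$ and $K$. I would double-check only the sign bookkeeping in the contraction for $R$; the result must reduce to $12F_0$ in the de Sitter case $F\equiv F_0$.

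For the special case $F_1=F_2=0$, I would note that $A$, $B$ and $C$ all tend to $F_0$. The Riemann tensor therefore tends, componentwise in the orthonormal frame, to the maximally symmetric form $F_0(\eta_{ac}\eta_{bd}-\eta_{ad}\eta_{bc})$. Substituting into the formulas above gives $R\to12F_0$ and $K\to 4F_0^2+16F_0^2+4F_0^2=24F_0^2$. The classification of the core as dS, AdS or Minkowski then follows from the sign of this constant sectional curvature.

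The main obstacle is the expansion $f=1-F_0r^2+\mathcal O(r^{2+a})$. The bare limit $rF'\to0$ gives only $F(r)-F_0=o(1)$, i.e.\ an $o(r^2)$ remainder, not a power-law one. I would therefore make explicit a mild rate assumption, $rF'(r)=\mathcal O(r^a)$ for some $a>0$, which holds for all the polymerization models considered. Integrating then gives
\begin{align}
|F(r)-F_0|\le\int_0^r|F'(s)|\,ds\le \frac{C}{a}\,r^a ,
\end{align}
and multiplying by $r^2$ gives the stated expansion. Without such a rate, I would state the remainder as $o(r^2)$, which is all that the leading-order maximal-symmetry conclusion needs.
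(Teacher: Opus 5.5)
Your proposal is correct, and for the two limits it follows the paper's argument. The paper quotes $R=-f''-4f'/r+2(1-f)/r^2$ and $K=(f'')^2+4(f')^2/r^2+4(1-f)^2/r^4$, substitutes $f=1-r^2F$, and passes to the limit. Your $K=4(A^2+4B^2+C^2)$ is the same expression, derived from the frame instead of quoted.

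You were right to recheck the signs. The footnote's frame components are not sign-consistent: for exact dS, $f=1-F_0r^2$, they give $R_{\hat r\hat\theta\hat r\hat\theta}=-F_0$ but $R_{\hat\theta\hat\phi\hat\theta\hat\phi}=+F_0$. Tracing them literally yields $f''+4f'/r+2(1-f)/r^2$, which equals $-8F_0$ on dS. Use instead the consistent assignment $R_{\hat t\hat r\hat t\hat r}=f''/2$, $R_{\hat t\hat\theta\hat t\hat\theta}=f'/(2r)$, $R_{\hat r\hat\theta\hat r\hat\theta}=-f'/(2r)$, $R_{\hat\theta\hat\phi\hat\theta\hat\phi}=(1-f)/r^2$. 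Then you recover the $R$ you wrote and $K$ is unchanged. The limits $-F_0,-F_0,F_0,F_0$ are exactly the components of $F_0(\eta_{ac}\eta_{bd}-\eta_{ad}\eta_{bc})$, so your maximal-symmetry argument stands.

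Where you genuinely differ from the paper is the remainder, and here you are right and the paper is not. Its proof simply asserts that $F_1=F_2=0$ gives $F(r)=F_0+\mathcal O(r^a)$, and this does not follow. Take $F(r)=F_0-1/\ln r$ near $r=0$, extended smoothly elsewhere. Then $rF'=(\ln r)^{-2}\to0$ and $r^2F''=-(\ln r)^{-2}-2(\ln r)^{-3}\to0$, yet $F-F_0$ decays more slowly than any power of $r$. This is admissible under the theorem's hypotheses: for $m>0$, $u$ is a smooth decreasing bijection of $(0,\infty)$ onto $(-1/L^2,\infty)$, so $\tilde f:=F\circ u^{-1}$ is $C^2$.

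So the $\mathcal O(r^{2+a})$ clause needs an added rate hypothesis, such as your $rF'=\mathcal O(r^a)$, or must be weakened to $o(r^2)$. All the tabulated models satisfy your hypothesis with $a=3$, since there $F$ is an analytic function of $r^3$ near the center.

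Your route also gains something structural. The leading-order maximal symmetry and $R\to12F_0$, $K\to24F_0^2$ follow from $F_1=F_2=0$ alone through the frame components. The paper's phrasing instead rests on the metric expansion, whose remainder it never justified.
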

	
	\begin{proof}[Proof of Corollary~\ref{cor:limiting-core-geometry}]
		For the metric \eqref{eq:sss-metric}, the Ricci scalar and Kretschmann scalar are
		\begin{align}
			R&=-f''(r)
			-\frac{4}{r}f'(r)+\frac{2}{r^2}\left[1-f(r)\right],\\
			K&=\left[f''(r)\right]^2+\frac{4}{r^2}\left[f'(r)\right]^2+\frac{4}{r^4}\left[1-f(r)\right]^2 .
		\end{align}
		Substituting $f(r)=1-r^2F(r)$ gives
		\begin{align}
			R&=12F(r)+8rF'(r)+r^2F''(r),\\
			K&=\left[2F(r)+4rF'(r)+r^2F''(r)\right]^2\\&+4\left[2F(r)+rF'(r)\right]^2
		    +4F(r)^2.
		\end{align}
		Taking the limits stated in the corollary gives the announced expressions. If $F_1=F_2=0$, then $F(r)=F_0+\mathcal{O}(r^a)$, and therefore
		\begin{align}
			f(r)=1-r^2F(r)=1-F_0r^2+\mathcal{O}(r^{2+a}).
		\end{align}
		The limiting values of $R$ and $K$ then reduce to $12F_0$ and $24F_0^2$, respectively.
	\end{proof}
	Although our construction differs substantially from quasitopological gravity, there are useful similarities at the level of the reduced static solutions. Making this comparison helps clarify the role of the reconstruction function in our framework and will also be useful later when comparing the thermodynamic properties of our four-dimensional solutions with the higher-dimensional solutions of Ref.~\cite{HKMS:25}. 
	
	It is useful to distinguish two different notions of invertibility. In quasitopological gravity, the reduced field equation typically takes the form
	\begin{align}
	h_f(\psi)=S(r),
	\end{align}
	so the metric function is obtained through the inverse map
	\begin{align}
	\psi=h_f^{-1}(S(r)).
	\end{align}
	Thus monotonicity of \(h_f\) on the domain sampled by \(S(r)\) is a natural sufficient condition for a single-valued and nonsingular metric reconstruction.
	
	In the polymerized construction used here, the Birkhoff-type condition instead implies
	\begin{align}
	\psi=\tilde f(H_0^{(0)}),
	\end{align}
	and therefore the static metric is reconstructed directly as
	\begin{align}
	f(r)=1-r^2\tilde f(u(r)),
	\qquad
	u(r)=-\frac{1}{L^2}+\frac{2Gm}{r^3}.
    \end{align}
	At this level, the metric requires \(\tilde f(u(r))\), rather than \(\tilde f^{-1}\), to be finite and sufficiently smooth on the physical interval sampled by \(u(r)\). Hence global invertibility of \(\tilde f\) is not a prerequisite for the existence of the static geometry.
	
	Nevertheless, invertibility and branch choices re-enter in two related ways. First, reconstructing the underlying physical Hamiltonian \(H_0(b,R_1)\) from a chosen polymerization function generally leads to branchwise solutions, and the physical Hamiltonian must be selected on appropriate monotonic segments \cite{LS:26}. Second, in the thermodynamic analysis, the horizon condition
	\begin{align}
	\tilde f(u_h)=\frac{1}{r_h^2},
	\end{align}
	is often rewritten as
	\begin{align}
	u_h=\tilde f^{-1}\left(\frac{1}{r_h^2}\right),
	\end{align}
	where $u_{h}=u(r_h)$, and $r_h$ denotes the outer horizon radius.
	
	This step requires choosing an inverse branch. To avoid ambiguities in the present thermodynamic analysis, we restrict attention to parameter ranges and horizon branches on which the relevant inverse is single-valued.

	\begin{table*}[t]
		\centering
		\renewcommand{\arraystretch}{1.8}
		\setlength{\tabcolsep}{8pt}
		\begin{tabular}{|c||c||c||c|}
			\hline
			\rule{0pt}{3.8ex}
			Model 
			& $\tilde{f}(\mathcal{X})$ 
			& $f(r)$
			& Core geometry
			\rule[-1.6ex]{0pt}{0pt}
			\\
			\hline\hline
			
			\rule{0pt}{4.8ex}
			I 
			& $\displaystyle\frac{\mathcal{X}}{1+\ell^2\mathcal{X}}$
			& $\displaystyle 1-r^2\frac{u(r)}{1+\ell^2u(r)}$
			& dS
			\rule[-2.0ex]{0pt}{0pt}
			\\
			\hline
			
			\rule{0pt}{5.2ex}
			II
			& $\displaystyle\frac{\mathcal{X}}{1+\ell^2\mathcal{X}}\frac{1-\epsilon\ell^2\mathcal{X}}{1+\epsilon\ell^2\mathcal{X}}$
			& $\displaystyle 1-r^2\frac{u(r)}{1+\ell^2u(r)}\frac{1-\epsilon\ell^2u(r)}{1+\epsilon\ell^2u(r)}$
			& AdS
			\rule[-2.2ex]{0pt}{0pt}
			\\
			\hline
			
			\rule{0pt}{5.2ex}
			III
			& $\displaystyle \frac{-1+\sqrt{1+4\ell^4\mathcal{X}^2}}{2\ell^4\mathcal{X}}$
			& $\displaystyle 1-r^2\frac{-1+\sqrt{1+4\ell^4u(r)^2}}{2\ell^4u(r)}$
			& dS
			\rule[-2.2ex]{0pt}{0pt}
			\\
			\hline
			
			\rule{0pt}{5.2ex}
			IV
			& $\displaystyle \frac{-1+\sqrt{1+4\ell^4\mathcal{X}^2}}{2\ell^4\mathcal{X}}\frac{1-\epsilon\ell^2\mathcal{X}}{1+\epsilon\ell^2\mathcal{X}}$
			& $\displaystyle 1-r^2\frac{-1+\sqrt{1+4\ell^4u(r)^2}}{2\ell^4u(r)}\frac{1-\epsilon\ell^2u(r)}{1+\epsilon\ell^2u(r)}$
			& AdS
			\rule[-2.2ex]{0pt}{0pt}
			\\
			\hline
			
			\rule{0pt}{4.8ex}
			V
			& $\displaystyle \frac{\mathcal{X}}{\sqrt{1+\ell^4\mathcal{X}^2}}$
			& $\displaystyle 1-r^2\frac{u(r)}{\sqrt{1+\ell^4u(r)^2}}$
			& dS
			\rule[-2.0ex]{0pt}{0pt}
			\\
			\hline
			
			\rule{0pt}{5.2ex}
			VI
			& $\displaystyle \frac{\mathcal{X}}{\sqrt{1+\ell^4\mathcal{X}^2}}\frac{1-\epsilon\ell^2\mathcal{X}}{1+\epsilon\ell^2\mathcal{X}}$
			& $\displaystyle 1-r^2\frac{u(r)}{\sqrt{1+\ell^4u(r)^2}}\frac{1-\epsilon\ell^2u(r)}{1+\epsilon\ell^2u(r)}$
			& AdS
			\rule[-2.2ex]{0pt}{0pt}
			\\
			\hline
			
			\rule{0pt}{5.2ex}
			VII
			& $\displaystyle \frac{\mathcal{X}}{(1+\ell^{2\mathrm{N}}\mathcal{X}^{\mathrm{N}})^{1/\mathrm{N}}}$
			& $\displaystyle 1-r^2\frac{u(r)}{\left(1+\ell^{2\mathrm{N}}u(r)^{\mathrm{N}}\right)^{1/\mathrm{N}}}$
			& dS
			\rule[-2.2ex]{0pt}{0pt}
			\\
			\hline
			
			\rule{0pt}{5.2ex}
			VIII
			& $\displaystyle \frac{\mathcal{X}}{(1+\ell^{2\mathrm{N}}\mathcal{X}^{\mathrm{N}})^{1/\mathrm{N}}}\frac{1-\epsilon\ell^2\mathcal{X}}{1+\epsilon\ell^2\mathcal{X}}$
			& $\displaystyle 1-r^2\frac{u(r)}{\left(1+\ell^{2\mathrm{N}}u(r)^{\mathrm{N}}\right)^{1/\mathrm{N}}}\frac{1-\epsilon\ell^2u(r)}{1+\epsilon\ell^2u(r)}$
			& AdS
			\rule[-2.2ex]{0pt}{0pt}
			\\
			\hline
		\end{tabular}
		\caption{\textbf{Models: Metric functions and  core geometry.} Representative polymerization functions and their corresponding metric functions. Here $u(r)=-1/L^2+2Gm/r^3$. The AdS-core entries assume $\epsilon>0$; for $\epsilon=0$ the corresponding models reduce to their dS core counterparts.}
		\label{tab:polymerization-models}
	\end{table*}

\begin{table*}[t]
	\centering
	\renewcommand{\arraystretch}{1.8}
	\setlength{\tabcolsep}{8pt}
	\begin{tabular}{|c||c||c||c||c|}
		\hline
		\rule{0pt}{3.8ex}
		Model 
		& $\tilde{f}(\mathcal{X})$ 
		& $L^2_{\mathrm{eff}}/L^2$
		& $G_{\mathrm{eff}}/G$
		& Parameter range
		\rule[-1.6ex]{0pt}{0pt}
		\\
		\hline\hline
		
		\rule{0pt}{4.8ex}
		I 
		& $\displaystyle\frac{\mathcal{X}}{1+\ell^2\mathcal{X}}$
		& $1-\bar{\ell}^2$
		& $\displaystyle \frac{1}{\left(1-\bar{\ell}^2\right)^2}$
		& $L^2>\ell^2$
		\rule[-2.0ex]{0pt}{0pt}
		\\
		\hline
		
		\rule{0pt}{5.2ex}
		II
		& $\displaystyle\frac{\mathcal{X}}{1+\ell^2\mathcal{X}}\chi(r)$
		& $\displaystyle \frac{\left(1-\bar{\ell}^2\right)\left(1-\epsilon\bar{\ell}^2\right)}{1+\epsilon\bar{\ell}^2}$
		& $\displaystyle \frac{1+2\epsilon\bar{\ell}^2-\left(\epsilon^2+2\epsilon\right)\bar{\ell}^4}{\left(1-\bar{\ell}^2\right)^2\left(1-\epsilon\bar{\ell}^2\right)^2}$
		& $L^2>\ell^2$
		\rule[-2.2ex]{0pt}{0pt}
		\\
		\hline
		
		\rule{0pt}{5.2ex}
		III
		& $\displaystyle \frac{-1+\sqrt{1+4\ell^4\mathcal{X}^2}}{2\ell^4\mathcal{X}}$
		& $\displaystyle \frac{2\bar{\ell}^4}{\sqrt{1+4\bar{\ell}^4}-1}$
		& $\displaystyle \frac{-1+\sqrt{1+4\bar{\ell}^4}}{2\bar{\ell}^4\sqrt{1+4\bar{\ell}^4}}$
		& ---
		\rule[-2.2ex]{0pt}{0pt}
		\\
		\hline
		
		\rule{0pt}{5.2ex}
		IV
		& $\displaystyle \frac{-1+\sqrt{1+4\ell^4\mathcal{X}^2}}{2\ell^4\mathcal{X}}\chi(r)$
		& $\displaystyle \frac{2\bar{\ell}^4\left(1-\epsilon\bar{\ell}^2\right)}{\left(1+\epsilon\bar{\ell}^2\right)\left(-1+\sqrt{1+4\bar{\ell}^4}\right)}$
		& $\displaystyle \frac{Y_{\ast}-\epsilon^2\bar{\ell}^4Y_{\ast}-3\epsilon\bar{\ell}^2(4\bar{\ell}^4+Y_{\ast})}{2\bar{\ell}^4(1-\epsilon^2\bar{\ell}^2)(1-Y_{\ast})}$
		& $L^2>\epsilon \ell^2$
		\rule[-2.2ex]{0pt}{0pt}
		\\
		\hline
		
		\rule{0pt}{4.8ex}
		V
		& $\displaystyle \frac{\mathcal{X}}{\sqrt{1+\ell^4\mathcal{X}^2}}$
		& $\displaystyle \sqrt{1+\bar{\ell}^4}$
		& $\displaystyle \frac{1}{(1+\bar{\ell}^4)^{3/2}}$
		& ---
		\rule[-2.0ex]{0pt}{0pt}
		\\
		\hline
		
		\rule{0pt}{5.2ex}
		VI
		& $\displaystyle \frac{\mathcal{X}}{\sqrt{1+\ell^4\mathcal{X}^2}}\chi(r)$
		& $\displaystyle \frac{(1-\epsilon\bar{\ell}^2)\sqrt{1+\bar{\ell}^4}}{1+\epsilon\bar{\ell}^2}$
		& $\displaystyle \frac{1+2\epsilon(\bar{\ell}^6+\bar{\ell}^2)-\epsilon^2\bar{\ell}^4}{(1-\epsilon^2\bar{\ell}^2)(1+\bar{\ell}^4)^{3/2}}$
		& $L^2>\epsilon \ell^2$
		\rule[-2.2ex]{0pt}{0pt}
		\\
		\hline
		
		\rule{0pt}{5.2ex}
		VII
		& $\displaystyle \frac{\mathcal{X}}{(1+\ell^{2\mathrm{N}}\mathcal{X}^{\mathrm{N}})^{1/\mathrm{N}}}$
		& $\displaystyle \left(1+(-\bar{\ell}^{2})^{\mathrm{N}}\right)^{1/\mathrm{N}}$
		& $\displaystyle \left(1+(-\bar{\ell}^2)^{\mathrm{N}}\right)^{-\frac{\mathrm{N}+1}{\mathrm{N}}}$
		& $\begin{array}{c}
			\mathrm{N}\ \mathrm{even}: \text{---}\\
			\mathrm{N}\ \mathrm{odd}: L^2>\ell^2
		\end{array}$
		\rule[-2.2ex]{0pt}{0pt}
		\\
		\hline
		
		\rule{0pt}{5.2ex}
		VIII
		& $\displaystyle \frac{\mathcal{X}}{(1+\ell^{2\mathrm{N}}\mathcal{X}^{\mathrm{N}})^{1/\mathrm{N}}}\chi(r)$
		& $\displaystyle \frac{(1-\epsilon^2\bar{\ell}^2)\left(1+(-\bar{\ell}^{2})^{\mathrm{N}}\right)^{1/\mathrm{N}}}{1+\epsilon\bar{\ell}^2}$
		& $\displaystyle \frac{(1+2\epsilon\bar{\ell}^2)(1+(-\bar{\ell}^2)^{\mathrm{N}}-\epsilon^2\bar{\ell}^4)}{(1-\epsilon\bar{\ell}^2)\left(1+(-\bar{\ell}^2)^{\mathrm{N}}\right)^{\frac{\mathrm{N}+1}{\mathrm{N}}}}$
		& $\begin{array}{c}
			\mathrm{N}\ \mathrm{even}: L^2>\epsilon \ell^2\\
			\mathrm{N}\ \mathrm{odd}: L^2>\ell^2
		\end{array}$
		\rule[-2.2ex]{0pt}{0pt}
		\\
		\hline
	\end{tabular}
	\caption{\textbf{Models: Effective couplings and parameter ranges.}
		Representative polymerization functions and their corresponding effective AdS length and effective Newton coupling. To simplify the expressions, we introduce the dimensionless parameter $\bar{\ell}=\ell/L$, the auxiliary quantity $Y_{\ast}=1-\sqrt{1+4\bar{\ell}^{\,2}}$, and $\chi(r)=\frac{1-\epsilon\ell^2\mathcal{X}}{1+\epsilon\ell^2\mathcal{X}}$, with $\mathcal{X}=\frac{2m}{r^3}$. The last column gives the parameter ranges for which the geometries are fully regular and admit AdS asymptotics.}
	\label{tab:polymerization-effective}
\end{table*}

In the quasitopological construction, regularity is often ensured by requiring the theory function $h_f$ to diverge at a finite value of $\psi$, so that the inverse $h_f^{-1}(S)$ approaches a finite constant as $S\to\infty$. In the present polymerized reconstruction the role of $h_f^{-1}$ is played directly by $\tilde f$. Thus the corresponding regularity requirement is not a divergence of $\tilde f$, but rather saturation of $\tilde f(u)$ to a finite value, with sufficiently mild derivatives, as $u\to+\infty$.

The purpose of our construction is to generate polymerized vacuum regular ultracompact object geometries with asymptotically AdS behavior. This is achieved by choosing the polymerization function $\tilde{f}$ so that it satisfies the conditions stated in Theorem~\ref{thm:polymerized-regularity}. The resulting geometries are unique in the following Birkhoff-type sense: once the reconstruction function \(\tilde f\) and the bare AdS length \(L\) are specified, the static spacetime is determined by a single integration constant, identified with the mass parameter. Regular ultracompact-object geometries are most often constructed by replacing the central singularity with a  dS core, as happens in well-known examples such as the Bardeen \cite{B:68}, Hayward \cite{H:06}, and Dymnikova spacetimes \cite{D:92}. This is not, however, the only possible mechanism for regularization. One may also obtain nonsingular geometries whose central region is AdS-like. Regardless of whether the core is dS or AdS, the avoidance of a curvature singularity requires a departure from the assumptions of the classical singularity theorems \cite{Bambi:book:23}. Equivalently, a regular geometry cannot obey all standard classical energy conditions everywhere; which condition fails, and where the violation takes place, depends on the specific model under consideration. A dS core is associated with positive energy density and negative pressure, and therefore commonly leads to a violation of the strong energy condition in the central region. An exact AdS-type core has the opposite effective behavior: the energy density is negative while the pressure is positive. 

There is also a physical motivation for considering AdS-like interiors. As a compact object is compressed toward the Buchdahl bound \cite{B:59}, the pressure inside the object increases and can become arbitrarily large and positive. Such extreme pressures may enhance quantum effects, whose backreaction can be described by an effective energy-momentum tensor with negative energy density near the origin \cite{ABRG:22,ABRG:24,RT:23,R:23}. This provides a natural setting in which an AdS-core regularization can arise. In addition, recent work has shown that regular metrics originally formulated with dS cores can be consistently deformed so that their central behavior becomes AdS-like instead \cite{ALNV:25}.

AdS core interiors have also appeared in other RBH constructions, including quasitopological gravity and models based on nonlinear electrodynamics. The mechanism, however, is different from the one considered here. In nonlinear electrodynamics models, the change from a dS-like to an AdS-like core is tied to the electromagnetic sector: the charge, or equivalently the electrostatic self-energy, introduces an additional physical scale and can change the sign of the effective near-center curvature \cite{HKMS:25}. Consequently, the thermodynamic phase structure in such models depends not only on regularity, but also on the extra matter charge and its conjugate thermodynamic variable. By contrast, the polymerized solutions studied here are vacuum regularizations. No electromagnetic charge is introduced; the dS-core and AdS core branches are generated by different choices of the reconstruction function within the same vacuum framework. This makes the comparison below a test of how the vacuum resolution of the singularity itself affects the thermodynamics. Polymerized vacuum solutions with AdS cores were constructed in Ref.~\cite{LS:26}\footnote{Given the similarity of the reduced static equations in spherical symmetry, one may expect both polynomial and non-polynomial quasitopological gravity theories to accommodate vacuum RBH solutions in which the singularity is resolved by an AdS core. For example, the constructions of Refs.~\cite{BCH:25} and \cite{BCR:26} provide natural frameworks in which to investigate such solutions.} .

We briefly review the procedure for constructing AdS core metrics from their dS core counterparts. A detailed discussion can be found in Ref.~\cite{ALNV:25}, and its application to the polymerized case in Ref.~\cite{LS:26}. The basic idea is to modify the Misner-Sharp mass in such a way that the coefficient of the $\mathcal{O}(r^2)$ term in the near-center expansion changes sign, thereby converting the dS core into an AdS core. At the same time, the modification must preserve the desired large radius behavior. In the asymptotically flat case this means retaining the Minkowski limit together with the standard Schwarzschild falloff. In the present setting the large radius limit is instead AdS: the metric must approach a Schwarzschild--AdS form with the appropriate effective AdS length and effective Newton coupling determined by the asymptotic branch of the reconstruction function.

Following Ref.~\cite{ALNV:25}, a regular geometry with a dS core can be converted into one with an AdS core by modifying the effective mass profile. Let the original dS core metric be written as

\begin{align}
	f(r)=1-\frac{2M_{\mathrm{dS}}(r)}{r}.
\end{align}

For such geometries, the Misner-Sharp mass is positive close to the origin, reflecting the dS-like central behavior. By contrast, an AdS-like core requires the effective Misner-Sharp mass to become negative in the central region. To implement this change while keeping the large distance behavior unchanged, we multiply the original mass function by an interpolating factor $\chi(r)$, so that
\begin{align}
	f(r)=1-\frac{2M_{\mathrm{dS}}(r)\chi(r)}{r}.
	\label{eq:fmod}
\end{align}
The role of $\chi(r)$ is to reverse the sign of the mass function near the center while leaving the asymptotic Schwarzschild form intact. Therefore it must obey
\begin{align}
	\lim_{r\to\infty}\chi(r)=1,
	\qquad
	\lim_{r\to0}\chi(r)=-1.
	\label{eq:chi-limits}
\end{align}
Many interpolating functions can satisfy these two conditions. In this work we use the form proposed in Ref.~\cite{ALNV:25}, which is technically convenient for the applications considered below:
\begin{align}
	\chi(r)=
	\frac{r^n-\epsilon \kappa^{\,n-p}\ell^p}
	{r^n+\epsilon \kappa^{\,n-p}\ell^p},
	\qquad
	n\geqslant1,\quad p\geqslant1.
	\label{eq:chi}
\end{align}
Here $\kappa$ is a positive length scale, while $\ell$ denotes the regularization scale. The latter is usually interpreted as the length at which quantum gravity corrections to the classical Einstein equations become relevant \cite{F:16}. The rational expression in Eq.~\eqref{eq:chi} should be viewed as one convenient representative rather than a unique choice. Other functions with the same limiting behavior in Eq.~\eqref{eq:chi-limits} would lead to different AdS core completions of the same original dS core geometry.

We also stress that $\ell$ need not be fixed in advance to the Planck length $\ell_p$ \cite{COS:22,CLMMOS:23}\footnote{This scale need not be universal or constant. In particular, studies of black hole interiors including backreaction effects suggest that the minimal length $\ell$ may evolve dynamically \cite{BBCRG:21,BBCRG:22,CRFLV:23}. This provides motivation for allowing $\ell$ to vary when studying black hole thermodynamics \cite{SS:24,S:24} and possible observational signatures \cite{ms:24}.}. Finally, the parameter \(\epsilon\) in Eq.~\eqref{eq:chi} is dimensionless. In this work we restrict to
\begin{align}
0\leqslant \epsilon \leqslant 1 .
\end{align}
The endpoint \(\epsilon=0\) removes the deformation and restores the original dS core metric, while \(0<\epsilon\leqslant1\) gives AdS core deformations of increasing strength. The lower bound ensures that the denominator of \(\chi(r)\) does not introduce an additional pole at positive radius. The upper bound fixes a convenient representative range of deformations; within this range the admissible AdS branches used below are selected by the regularity and asymptotic conditions summarized in Table~\ref{tab:polymerization-effective}. In the plots and explicit thermodynamic comparisons we use the representative choice \(\epsilon=1\).

This shows that, once a particular polymerization function $\tilde f(\mathcal X)$ has been specified, it can be deformed through the interpolating function $\chi(r)$ in order to obtain the AdS core counterpart of the original dS core polymerized vacuum solution. This is the construction adopted in the present paper before turning to the thermodynamic analysis. Representative models are collected in Table~\ref{tab:polymerization-models}, while the corresponding metric functions for different values of the minimal length scale are displayed in Fig.~\ref{fig:f}. Odd models possess dS cores, whereas even models develop AdS cores when $\epsilon\neq 0$.

For fixed bare couplings $L$ and $G$, the AdS core deformation does not modify the bare AdS length or the bare Newton constant. Rather, it changes the branch selected by the reconstruction function $\tilde{f}$, and hence the effective asymptotic quantities extracted from the large-radius expansion. This is analogous to what occurs in quasitopological gravity, where a chosen resummation function determines an effective AdS length and an effective Newton coupling through the asymptotic branch of the solution \cite{HKMS:25}. In the present polymerized construction, these quantities are determined by Eq.~\eqref{eq:GL-eff}, and the resulting branch-dependent values for the models considered here are summarized in Table~\ref{tab:polymerization-effective}.

A well-defined asymptotically AdS expansion, together with the absence of additional singularities associated with pathological branches of $\tilde{f}$, imposes restrictions on the ratio between the regularization scale and the bare AdS length. These restrictions are displayed in the rightmost column of Table~\ref{tab:polymerization-effective}. In what follows, we focus on the RBH branch with two horizons, since this branch has nonzero Hawking temperature and therefore provides the relevant setting for the thermodynamic analysis.

\section{Thermodynamics}\label{sec:thermo}

There are several complementary ways to formulate black hole thermodynamics. Depending on the theory and the ensemble under consideration, the entropy may be obtained from the covariant phase-space construction of Iyer and Wald~\cite{IW:94}, the Euclidean path integral~\cite{GH:77}, conical-deficit methods~\cite{SU:94}, entanglement entropy arguments~\cite{S:11}, or microscopic/CFT techniques~\cite{C:99}. In the present work, we adopt a conservative semiclassical prescription adapted to the reduced effective description of the polymerized vacuum solutions. The temperature is fixed by the surface gravity at the outer horizon, while the entropy is defined thermodynamically by integrating the first law.

For a static black hole, regularity of the Euclidean section fixes the temperature to be
\begin{align}
	T=\frac{\kappa_{\rm s}}{2\pi},
\end{align}
where \(\kappa_{\rm s}\) is the surface gravity of the outer horizon. For the metric~\eqref{eq:metric}, this gives
\begin{align}
	T=\frac{1}{4\pi}\left[-2r_h\tilde f(u_h)+\frac{6Gm}{r_h^2}\tilde f'(u_h)\right],	\label{eq:temp}
\end{align}
where 
\begin{align}
u_h=-\frac{1}{L^2}+\frac{2Gm}{r_h^3}.
\end{align}

Here \(r_h\) denotes the outer horizon radius\footnote{When the horizon equation admits several real algebraic branches, we select the branch describing the outer black hole horizon. This branch has \(r_h>0\), has positive temperature on the thermodynamic branch, and reduces continuously to the Schwarzschild--AdS outer horizon as the regularization scale is sent to zero. Other branches are excluded when they lack this GR limit or do not describe the outer horizon relevant to the Hawking--Page comparison.}. The horizon condition \(f(r_h)=0\) implies
\begin{align}
	\tilde f(u_h)=\frac{1}{r_h^2},
	\qquad
	u_h=\tilde f^{-1}\left(\frac{1}{r_h^2}\right).
\end{align}
Using this relation, the conserved mass parameter of the reduced description can be written as a function of the horizon radius,
\begin{align}
	m(r_h) =\frac{r_h^3}{2G}\left[\frac{1}{L^2}+\tilde f^{-1}\left(\frac{1}{r_h^2}\right)\right].\label{eq:ADM-mass}
\end{align}
Substituting Eq.~\eqref{eq:ADM-mass} into Eq.~\eqref{eq:temp}, the temperature becomes
\begin{align}
	T(r_h)=\frac{1}{4\pi}\left[-\frac{2}{r_h}+3r_h\left(\frac{1}{L^2}+u_h\right)\tilde f'(u_h)\right].\label{eq:temp-simp}
\end{align}
Equations~\eqref{eq:ADM-mass} and~\eqref{eq:temp-simp} show explicitly that the thermodynamics is controlled by the reconstruction function \(\tilde f\).

A comment on the definition of temperature is useful at this point. In related effective theories, including covariant completions with shift-symmetric scalar sectors, alternative effective temperatures have sometimes been proposed~\cite{LHK:23}. Similar issues arise for RBHs sourced by nonlinear electrodynamics, where modified thermodynamic temperatures can appear once the constraints associated with the magnetic charge are included~\cite{SS:24,S:24}\footnote{A general discussion of black hole thermodynamics in the presence of constraints, with applications to several black hole spacetimes including RBHs, can be found in Ref.~\cite{MLS:25}.}. Such alternative definitions generally depend on additional information about the covariant completion, or on extra boundary terms in the Euclidean action. Since the polymerized model considered here is specified at the reduced Hamiltonian/effective level, we do not attempt to add completion-dependent contributions to the Euclidean action. Instead, we consistently use the surface-gravity temperature in the first law and in the phase structure analysis below.

We now turn to the entropy. In a fixed generally covariant completion, the entropy entering the first law should be given by the Iyer--Wald Noether charge, up to the standard ambiguities associated with boundary terms, total derivatives, and the normalization of the Noether charge. The same qualification applies to the conserved mass, which in a specified covariant theory should be computed using the corresponding covariant phase-space or Abbott--Deser--Tekin charge \cite{AD:82,DT:03}. The reduced polymerized model does not by itself select a unique off-shell covariant action. Different completions may agree on the reduced static solutions while differing by terms that affect both the Wald entropy and the normalization of the conserved charges. We therefore adopt a reduced thermodynamic prescription: the mass is identified with the conserved shell-Hamiltonian mass parameter, with its normalization fixed by the asymptotic metric, while the entropy is determined by integrating the first law. These quantities should be interpreted as the reduced thermodynamic mass and entropy that a covariant completion reproducing the same thermodynamic sector would have to recover.

The thermodynamic variables are most conveniently organized by treating the black hole solutions as a family labelled by \((r_h,P,\ell)\), where \(r_h\) is the outer horizon radius, \(P\) is the pressure associated with the bare cosmological constant, and \(\ell\) is the regularization scale. The pressure is defined by
\begin{align}
	P=\frac{3}{8\pi G L^2}.
\end{align}
Using the horizon condition, the mass may be written as
\begin{align}
	M(r_h,P,\ell)=\frac{4\pi}{3}P r_h^3+\frac{r_h^3}{2G}u_h(r_h,\ell),
	\quad
	\tilde f(u_h;\ell)=\frac{1}{r_h^2}.
	\label{eq:mass-extended}
\end{align}
Here $M$ denotes the conserved mass parameter of the reduced Hamiltonian description and is interpreted as enthalpy in the extended phase space \cite{KRT:09}. Its normalization is fixed by the asymptotic form of the metric,
	\begin{align}
	f(r)=1+\frac{r^{2}}{L_{\rm eff}^{2}}-\frac{2G_{\rm eff}M}{r}+\cdots ,
	\end{align}
	so that $M$ coincides with the vacuum shell integration constant $m$ introduced in Sec.~\ref{sec:LTB}. In a specified covariant completion, its identification with the corresponding ADM, Abbott--Deser--Tekin \cite{AD:82,DT:03}, or covariant phase-space charge must be verified using that theory’s action and boundary terms. Accordingly, the first law, free energy, and phase diagrams presented below should be understood within the reduced thermodynamic prescription adopted here.

The entropy is then determined from
\begin{align}
	\frac{\partial S}{\partial r_h}=\frac{1}{T}\left(\frac{\partial M}{\partial r_h}\right)_{P,\ell}.\label{eq:entropy-first-law}
\end{align}
Using Eqs.~\eqref{eq:temp-simp} and~\eqref{eq:mass-extended}, this gives the general expression
\begin{align}
	S=\frac{1}{G}\int dr_h\,\frac{2\pi r_h}{\tilde f'\!\left[\tilde f^{-1}\!\left(\frac{1}{r_h^2}\right)\right] }.
	\label{eq:entropy-general}
\end{align}
In the GR limit \(\tilde f(\mathcal X)=\mathcal X\), one has \(\tilde f'(\mathcal X)=1\), and Eq.~\eqref{eq:entropy-general} reduces to
\begin{align}
	S_{\rm GR}=\frac{1}{G}\int dr_h\,2\pi r_h=\frac{\pi r_h^2}{G}=\frac{A}{4G},
\end{align}
up to the usual normalization conventions for the gravitational coupling. Thus the Bekenstein--Hawking result is recovered in the classical limit.

With the entropy defined in this way, the extended first law takes the form
\begin{align}
	dM=T\,dS+V\,dP+\Psi\,d\ell .
	\label{eq:first-law-extended}
\end{align}
The quantities conjugate to \(P\) and \(\ell\) are
\begin{align}
	V=\left(\frac{\partial M}{\partial P}\right)_{S,\ell},
	\qquad
	\Psi=
	\left(\frac{\partial M}{\partial \ell}\right)_{r_h,P}-T\left(\frac{\partial S}{\partial \ell}\right)_{r_h}.
	\label{eq:conjugates}
\end{align}
Thus both the bare cosmological constant and the regularization scale are included in the thermodynamic phase space. In the phase structure analysis below, individual free-energy diagrams correspond to fixed choices of \(P\) and \(\ell\), but the underlying thermodynamic framework is the extended one described by Eq.~\eqref{eq:first-law-extended}.

In what follows, the thermodynamic pressure is associated with the bare cosmological constant entering the polymerized Hamiltonian, rather than with the effective cosmological constant read off from the asymptotic metric. This is the natural choice in the extended phase space: the bare cosmological constant is an independent coupling of the theory and can therefore be varied in the first law. By contrast, the effective cosmological constant is not an independent coupling. It is a derived quantity, obtained only after solving the vacuum relation for the chosen asymptotic branch. In general, the effective AdS radius depends on both the bare cosmological constant and the polymerization parameters. Treating the effective radius as the pressure would therefore mix an independent coupling with a derived quantity associated with the chosen asymptotic branch. This is the same distinction that appears in higher-curvature gravity, where the bare cosmological constant is varied as a coupling even though the physical AdS radius is determined only after solving the vacuum equation.

The Smarr relation follows from the scaling properties of the thermodynamic variables~\cite{Smarr:73}. While the first law gives the local variation of the mass, the Smarr formula gives the corresponding integrated scaling identity. In the present case, the regularization length \(\ell\) introduces an additional scale, and its conjugate contribution is required for homogeneous scaling. The relevant Smarr relation is
\begin{align}
	M=2TS+\ell\Psi-2PV .
	\label{eq:smarr}
\end{align}
This is consistent with the scaling
\begin{align}
	M&\to cM,
	&
	\ell&\to c\ell,
	&
	r_h&\to cr_h,
	&
	P&\to c^{-2}P,
	\nonumber\\
	S&\to c^2S,
	&
	T&\to c^{-1}T,
	&
	\Psi&\to \Psi,
	&
	V&\to c^3V .
\end{align}
Similar considerations apply in higher-curvature theories, where additional dimensionful couplings must also be included in the first law and Smarr relation~\cite{KRT:11,HBM:15,KRT:09}. For the models considered here, the relevant expressions for the thermodynamic conjugate potentials are given in Appendix~\ref{sec:app:thermo}.

Finally, we comment on the normalization of the entropy. Since the entropy is obtained by integrating the first law, it is fixed only up to a term independent of the horizon radius. Such a term is irrelevant for the fixed-\(\ell\) first law, but it is not irrelevant in the extended phase space, where the regularization scale is varied. In particular, logarithmic terms generated by the integration must be written with dimensionless arguments, as shown explicitly for the Hayward model in Appendix~\ref{sec:app:thermo}. We choose the integration constant so that the entropy is homogeneous under a common scaling of the length variables. With this prescription, the extended Smarr relation is satisfied. The same normalization also reproduces the standard Schwarzschild--AdS free-energy normalization in the GR limit: taking \(\ell\to0\) first and then \(r_h\to0\), equivalently \(m\to0\) after the GR limit has been taken, gives \(F=0\) for the thermal AdS background. This is the normalization used in the Hawking--Page analysis in Sec.~\ref{sec:PT}.

\begin{figure}[t]
	\centering
	\includegraphics[width=\columnwidth]{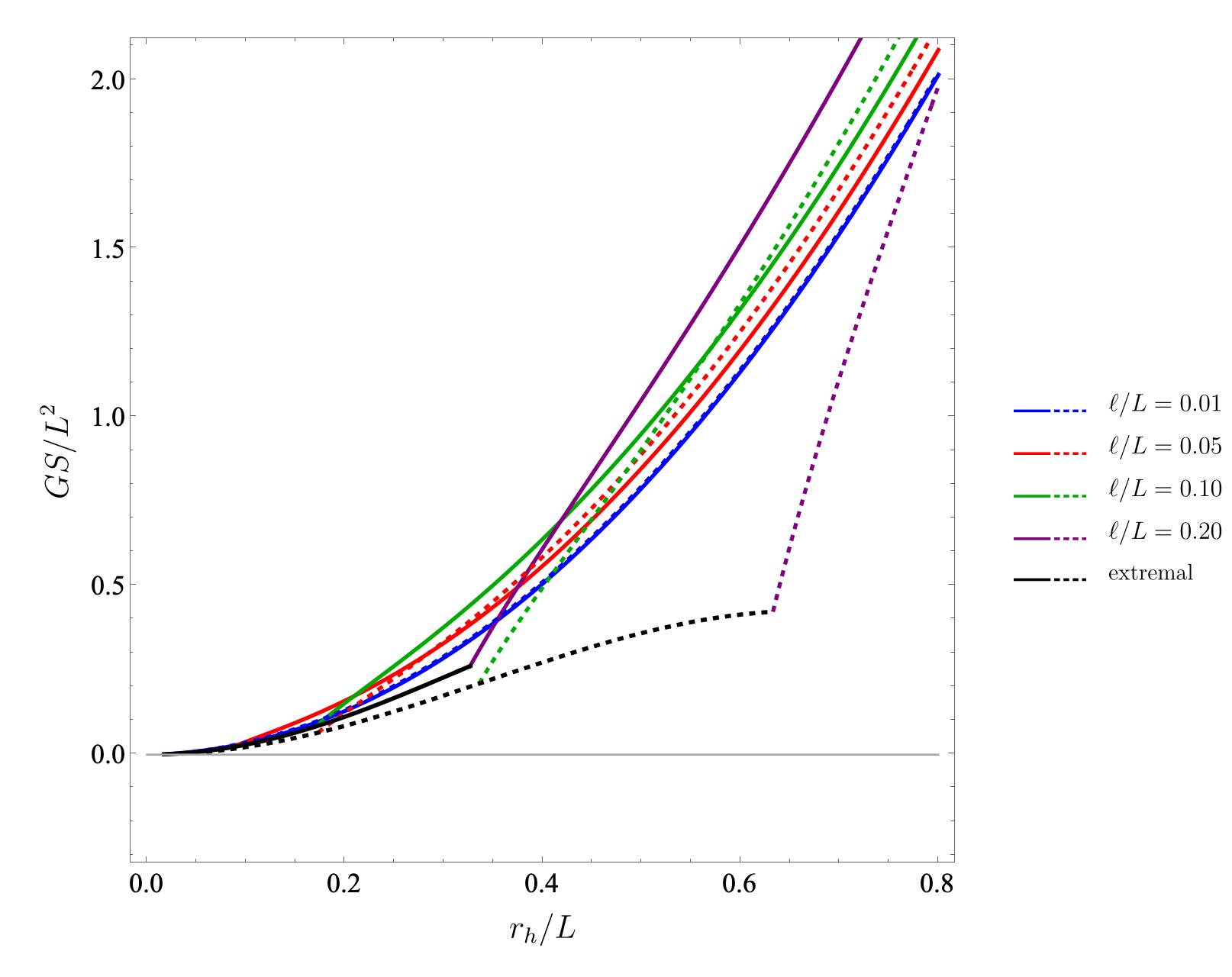}
	\caption{\textbf{Entropy: Models I and II.} Entropy as a function of the horizon radius. The colored solid curves correspond to Model I, whose regular core is dS-like, while the colored dashed curves correspond to Model II, whose regular core is AdS-like. Curves with the same color have the same value of the regularization scale \(\ell/L\). The black solid and dashed curves show the corresponding extremal radii for Models I and II, respectively.}
	\label{fig:entropy}
\end{figure}

The behaviour of the entropy is illustrated in Fig.~\ref{fig:entropy}. The colored solid curves correspond to Model I, while the colored dashed curves correspond to Model II. The black curves denote the corresponding extremal horizon radii. The physical positive-temperature branches start at these extremal curves. Although a formal continuation of the integrated entropy outside the physical branch can develop divergences, these regions lie below the extremal radius and are not part of the thermodynamic branch used in the Hawking--Page analysis. With the normalization adopted here, the entropy on the physical branches shown in the figure is non-negative and approaches the Schwarzschild--AdS area law at large horizon radius.

This prescription should be understood as part of the reduced thermodynamic scheme. In any specified covariant completion, the entropy entering the first law is expected to agree with the corresponding Iyer--Wald Noether charge, up to the standard ambiguities associated with boundary terms and total derivatives. Since the reduced polymerized model does not by itself select a unique off-shell covariant action, we fix the entropy within the reduced description by the requirements stated above: dimensionless logarithms, consistency with the extended Smarr relation, and the correct Schwarzschild--AdS free-energy normalization used in the Hawking--Page analysis.

\section{Phase structure}\label{sec:PT}

Asymptotically AdS black holes provide a particularly natural setting for black hole thermodynamics. The AdS geometry acts effectively as a confining box: freely propagating massive and massless particles do not escape to infinity as they do in asymptotically flat spacetimes. With reflecting boundary conditions imposed at the timelike conformal boundary, Hawking radiation can remain confined and equilibrate with the black hole. This makes the thermodynamic description especially clean, since for sufficiently large black holes, with horizon radius large compared with the AdS length scale, one may assume that the outgoing radiation has had enough time to reach equilibrium with the black hole. The importance of asymptotically AdS spacetimes extends beyond thermodynamics. Their boundary structure also provides the natural arena for holography, allowing bulk gravitational dynamics to be described in terms of a quantum field theory living on the boundary. This is the basis of the AdS/CFT correspondence, which has become one of the most powerful tools for both practical calculations and conceptual progress in quantum gravity. The canonical example is the duality between type IIB string theory in the AdS bulk and $\mathrm{N}=4$ super-Yang–Mills theory on the boundary \cite{M:99}. For this reason, the study of gravitational physics in asymptotically AdS spacetimes is directly connected to the study of strongly coupled gauge theories \cite{SS:02}.

In this asymptotically AdS setting, black holes exhibit the well-known Hawking--Page transition \cite{HP:83}. At low temperatures, small black holes are thermodynamically disfavored and eventually evaporate, leaving thermal radiation in AdS. By contrast, sufficiently large black holes can remain stable: their Hawking radiation is reflected by the AdS boundary and returns in finite time, allowing the black hole to equilibrate with its own radiation. These two phases are separated by a critical temperature $T_{\rm HP}$, at which the Hawking–Page transition occurs.

This behavior differs sharply from the asymptotically flat case. In AdS, the black hole temperature is not a monotonically decreasing function of the horizon size, and large black holes can have positive heat capacity. Through the AdS/CFT correspondence, the Hawking–Page transition is interpreted as a confinement/deconfinement transition in the dual boundary theory. It has therefore played an important role in the study of nonperturbative dynamics in strongly coupled CFTs, as well as in discussions of black hole thermodynamics and the information problem \cite{W:98,H:16}. The equilibrium state of the thermodynamic system corresponds to the global minimization of the free energy. To properly study the phase structure, we need to study the free energy as a function of temperature by drawing parametric plots using the outer horizon radius as a parameter. 

In the extended thermodynamic phase space, the bare cosmological constant is promoted to a pressure $P$, with conjugate thermodynamic volume \(V\) \cite{KRT:09}. With this identification, the black hole mass \(M\) is interpreted as enthalpy rather than internal energy \cite{KMT:17}. In other words, \(M\) includes both the internal energy of the black hole and the energy required to displace the vacuum energy of the surrounding spacetime. Equivalently,
\begin{align}
M=E+PV,
\end{align}
where \(E\) denotes the internal energy.

Allowing $P$, and hence $L$, to vary is not merely a formal device. It can be motivated from a consistent variational principle \cite{UTS:09}, and may also acquire a dynamical interpretation through a three-form gauge potential, as in the Brown--Teitelboim mechanism \cite{BT:87}. We also allow the regularization scale \(\ell\) to vary. This is motivated by the possibility that \(\ell\) is not a fixed external constant, but instead emerges from more fundamental degrees of freedom, for instance through vacuum expectation values of elementary fields \cite{GKK:96,CM:95}, or through backreaction effects during evaporation \cite{CRFLV:23,BBCRG:21,BBCRG:22}. Including variations of both \(P\) and \(\ell\) is also necessary for obtaining the correct Smarr relation \cite{KRT:09} , which follows from the scaling properties of the thermodynamic variables.

For the phase diagrams below, however, we work at fixed $P$ and fixed $\ell$. Each free-energy curve is therefore a canonical ensemble labelled by these two parameters. The relevant thermodynamic potential is
\begin{align}
	F=M-TS .
\end{align}

This quantity should be understood as the Gibbs free energy measured relative to the thermal AdS vacuum of the same branch of the theory. More precisely, for fixed bare couplings the reference background is the $m= 0$ solution obtained from the same reconstruction function $\tilde{f}$. This background is empty AdS with effective radius $L_{\mathrm{eff}}$, determined by Eq.~\eqref{eq:GL-eff}, while the thermodynamic pressure is still associated with the bare cosmological constant. Following the standard convention in higher-curvature, we set the free energy of this thermal AdS saddle to zero and measure the black hole free energy relative to it. With this normalization, the Hawking–Page transition occurs when $F$ changes sign.

Equilibrium configurations at fixed \(T\), \(P\), and \(\ell\) are determined by the global minimum of \(F\). Varying \(P\) then corresponds to comparing a family of such canonical ensembles, rather than varying the pressure along a single thermodynamic curve. Thus, for each fixed pair \((P,\ell)\), the phase structure is obtained by plotting \(F\) parametrically as a function of \(T\), with the horizon radius \(r_h\) used as the parameter.

\begin{figure}[t]
	\centering
	\includegraphics[width=\columnwidth]{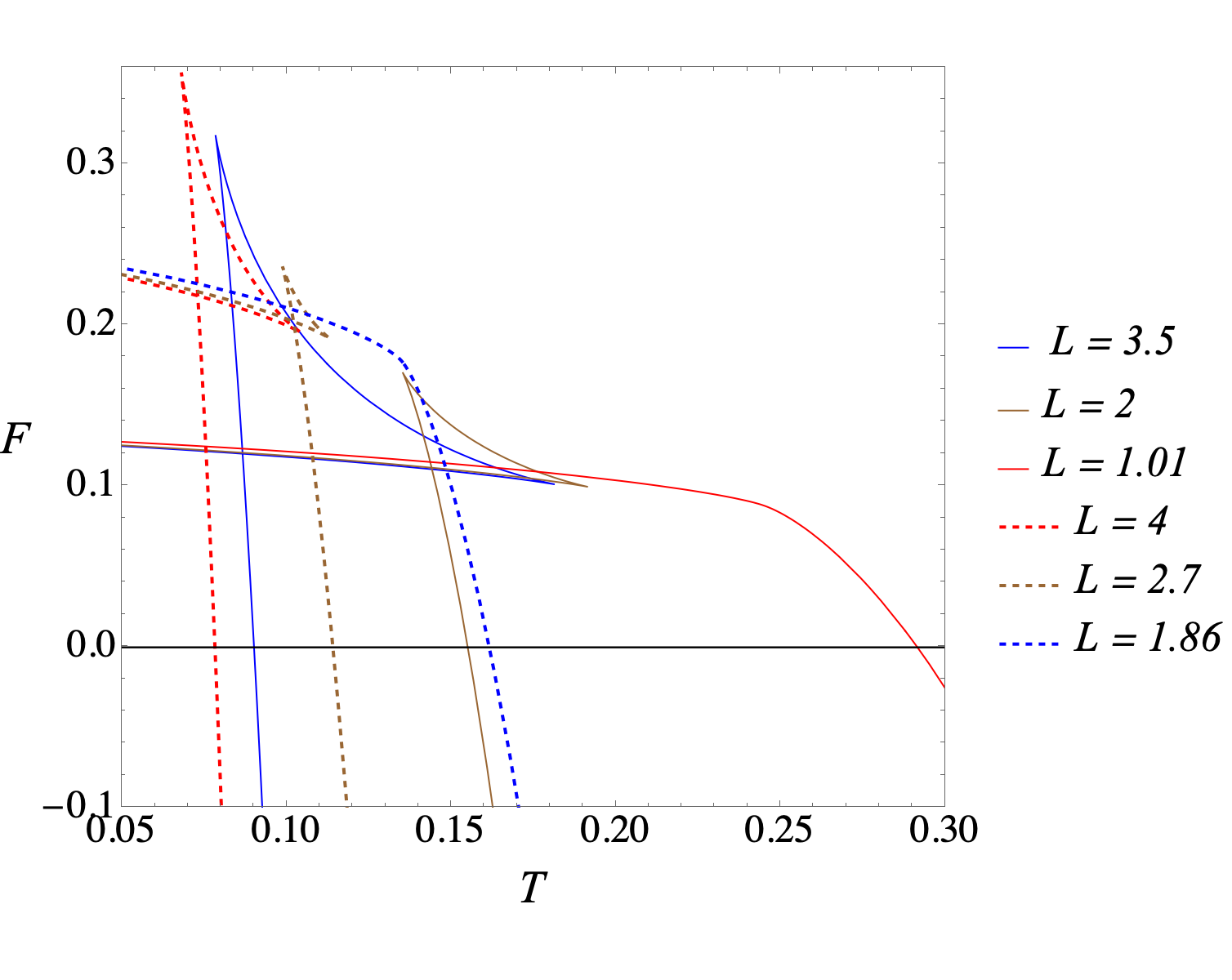}
\caption{\textbf{Free energy: Models I and II.} The free energy \(F\) is plotted parametrically as a function of the temperature \(T\), with the horizon radius used as the parameter. Solid curves correspond to Model I, which has a dS core, while dashed curves correspond to Model II, which has an AdS core. The minimal-length scale is fixed to \(\ell=0.1\), and the cosmological-constant parameter \(L\) is varied. We work in units with \(G=1\).}
	\label{fig:PT}
\end{figure}

The representative free energy diagrams for Models I and II are shown in Fig.~\ref{fig:PT}. Although the explicit thermodynamic functions are model dependent, the qualitative branch structure is the same for the model pairs considered in this work. The thermodynamically preferred configuration is selected by the crossing of the black hole free energy with the thermal AdS background. The dominant transition is therefore of Hawking–Page type. A small/large black hole structure can still appear at the level of subdominant saddles, but in the present models the corresponding critical behavior occurs at positive free energy and does not define the dominant first-order transition of the canonical ensemble.

This behaviour is consistent with the phase structure found for vacuum RBHs in quasitopological gravity in $D = 5$, where core regularization alone does not produce a physical small/large black hole phase transition. In higher dimensions, such as $D = 7$ or $D = 9$, a small/large transition can become thermodynamically relevant if it occurs at negative free energy, below the thermal AdS background \cite{HKMS:25}. The situation is also different for RBHs supported by nonlinear electrodynamics, where a magnetic charge introduces an additional thermodynamic scale. At fixed charge, complete evaporation to thermal AdS is not available in the same way, and the small/large black hole transition can become the dominant thermodynamic feature \cite{SS:24,S:24}. In contrast, within the vacuum RBHs considered here, the Hawking–Page transition is the robust feature of the canonical ensemble.

The regular core nevertheless leaves a quantitative imprint on the transition temperature. The Hawking–Page temperature for Models I and II is shown in Fig.~\ref{fig:Thp}. The dS core and AdS core branches have the same qualitative Hawking–Page structure, but the crossing point $F=0$ occurs at different temperatures. This ordering should not be attributed simply to the sign of the limiting core curvature. The Hawking–Page transition is determined by the full free-energy balance along the physical outer-horizon branch, including the mass function, temperature, entropy correction, and the endpoint structure selected by the reconstruction function. In the large AdS radius part of the branch, the dS core solution has a higher Hawking–Page temperature than its AdS core counterpart, while the ordering can be modified close to the lower admissible range of the AdS scale. Thus the asymptotically AdS structure controls the existence and type of transition, whereas the regularization core controls its quantitative location.

\begin{figure*}[t]
	\centering
	\includegraphics[width=\textwidth]{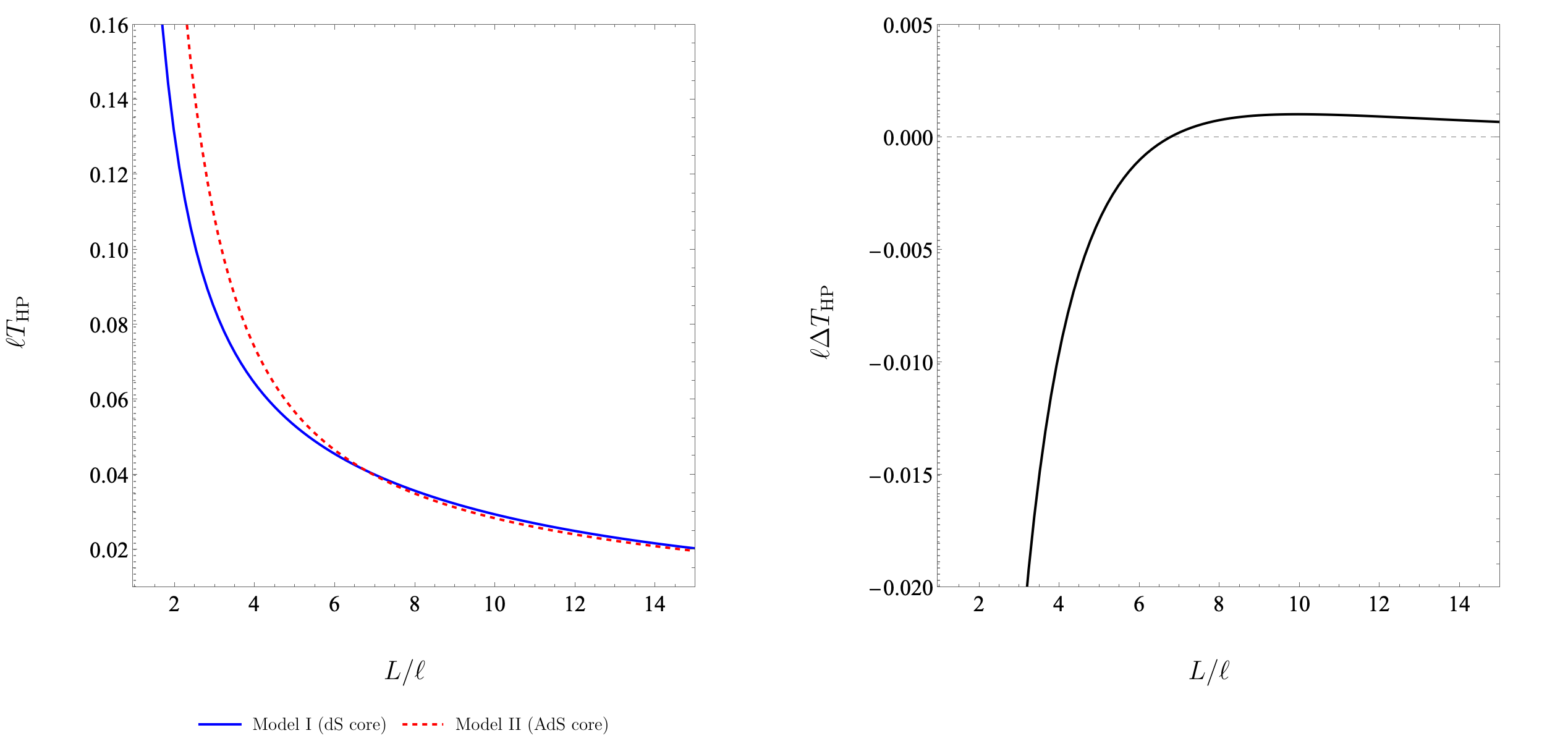}
	\caption{\textbf{Hawking--Page temperature for Models I and II.}
Left panel: dimensionless Hawking--Page temperature $\ell T_{\rm HP}$ as a function of the dimensionless AdS scale $L/\ell$ for Model I, which has a dS core, and Model II, which has an AdS core. Right panel: corresponding dimensionless difference
$\ell \Delta T_{\rm HP}=\ell\left(T^{\rm I}_{\rm HP}-T^{\rm II}_{\rm HP}\right)$.
The vertical dotted line marks the lower admissible boundary $L/\ell=1$, required for the correct asymptotic AdS behaviour as shown in Table~\ref{tab:polymerization-models}. The ordering of the transition temperatures reverses close to this boundary, while for sufficiently large $L/\ell$ one finds $T^{\rm I}_{\rm HP}>T^{\rm II}_{\rm HP}$.}	\label{fig:Thp}
\end{figure*}

\subsection{Equation of state}\label{sec:EoS}

We now turn to the equation of state. The purpose of this analysis is diagnostic: the \(P-v\) plane shows how the regularization scale modifies the thermodynamic response of the black hole branch. In particular, the regularized models can display a finite-volume pressure divergence reminiscent of the excluded-volume singularity of a van der Waals fluid. This analogy should be understood in a thermodynamic sense only. It does not imply that the black hole undergoes a physical van der Waals small/large transition, nor that the horizon degrees of freedom behave as ordinary molecules with a literal hard core. The physical transition identified above remains the Hawking–Page transition, obtained by comparing the black hole free energy with thermal AdS.

The useful point of comparison is the ordinary van der Waals equation,
\begin{align}
	P=\frac{T}{v-b}-\frac{a}{v^2},
\end{align}
where $a$ measures attractive interactions and $b$ is an excluded-volume parameter. In black hole thermodynamics the specific volume is an effective thermodynamic measure of the horizon scale rather than a microscopic molecular volume. In four dimensions, and in the convention $G= 1$, it is
\begin{align}
	v=2r_h.
\end{align}
Using the expressions for the mass, temperature, and pressure, the equation of state for the present class of models can be written in terms of $T$ and $r_h$ as
\begin{widetext}
	\begin{align}
		P=\frac{T}{2Gr_h\tilde{f}'(\tilde{f}^{-1}(1/r^2_h))}+\left(\frac{1}{4\pi G r^2_h \tilde{f}'(\tilde{f}^{-1}(1/r^2_h))}-\frac{3}{8\pi G}\tilde{f}^{-1}(1/r^2_h)\right).\label{eq:EoS-gen}
	\end{align}
\end{widetext}
In the GR limit,
\begin{align}
	\tilde{f}(u)=u, \quad \tilde{f}'(u)=1,
\end{align}
this reduces to the Schwarzschild–AdS equation of state
\begin{align}
	P=\frac{T}{2G r_h}-\frac{1}{8\pi G r_h^2},
\end{align}
For $G=1$ and $v=2r_h$, this becomes 
\begin{align}
	P=\frac{T}{v}-\frac{1}{2\pi v^2}.
\end{align}
The first term has the ideal gas form, while the second is the attractive gravitational correction. Away from the GR limit, the reconstruction function modifies both the coefficient of the thermal term and the temperature-independent contribution.
Let $u_h$ denote the value of the curvature variable on a real physical horizon branch,
\begin{align}
	u_h=\tilde f^{-1}(1/r_h^2),
	\qquad
	\tilde f(u_h)=\frac{1}{r_h^2}.
\end{align}
On any regular segment for which $\tilde{f}'(u_h)\neq 0$, the temperature-independent part of Eq.~\eqref{eq:EoS-gen} is
\begin{align}
	\mathcal{A}(r_h)=\frac{1}{4\pi G r^2_h\tilde{f}'(u_h)}-\frac{3}{8\pi G}u_h,
\end{align}
or equivalently 
\begin{align}
	\mathcal A(r_h)
	=
	\frac{\tilde f(u_h)}{8\pi G\,\tilde f'(u_h)}
	\left[
	2-3\nu(u_h)
	\right].
\end{align}
It is useful to introduce the algebraic growth index 
\begin{align}
	\nu(u_h)=\frac{u_h\tilde{f}'(u_h)}{\tilde{f}(u_h)}.
\end{align}
Then 
\begin{align}
	\mathcal{A}(r_h)=\frac{\tilde{f}(u_h)}{8\pi G \tilde{f}'(u_h)}\left(2-3\nu(u_h)\right).
\end{align}
Since $\tilde{f}(u_h)=1/r^2_h>0$, the sign of this term is controlled by the factor \(2-3\nu(u_h)\) and $\tilde{f}'(u_h)$. In the Schwarzschild--AdS limit, \(\tilde f(u)=u\), so \(\nu=1\), and the temperature-independent term reduces to the usual negative gravitational correction to the ideal gas term. In this equation of state sense, a negative \(\mathcal A\) is attractive, while a positive \(\mathcal A\) is effectively repulsive.

The finite volume divergence visible in Fig.~\ref{fig:Pv} is not a restatement of regularity at $r = 0$. It is a branch-endpoint effect in the horizon equation. The pressure contains the factor $1/\tilde{f}'(u_h)$, and the horizon condition fixes $u_h$ implicitly through $\tilde{f}(u_h)=1/r^2_h$. Therefore a pressure divergence at finite specific volume occurs when the physical inverse branch degenerates at a finite endpoint with
\begin{align}
	\tilde{f}'(u_h)\to 0.
\end{align}
This is the precise sense in which the regularization scale produces an excluded-volume-like behaviour. The lower end-point restricts the accessible horizon volumes, but it does not imply a literal microscopic hard core interpretation.

For Model I,
\begin{align}
	u_h=\frac{1}{r_h^2-\ell^2},
	\qquad
	\tilde f'(u_h)=\frac{(r_h^2-\ell^2)^2}{r_h^4}.
\end{align}
The physical branch exists for $r_h> \ell$. Substitution into Eq.~\eqref{eq:EoS-gen} gives
\begin{align}
	P_{\mathrm{I}}=
	\frac{T r_h^3}{2G(r_h^2-\ell^2)^2}
	+
	\frac{3\ell^2-r_h^2}{8\pi G(r_h^2-\ell^2)^2}.
\end{align}
Thus $P_{\rm I}$ diverges as $r_h\to \ell^{+}$, or equivalently as $v\to 2\ell$ when $G=1$ (See Fig.~\ref{fig:Pv}). The sign-change criterion is also explicit. For this model,
\begin{align}
	\nu(u_h)=\frac{1}{1+\ell^2u_h}.
\end{align}
The condition $\nu_{\rm I} = 2/3$ gives
\begin{align}
	u_{h,*}=\frac{1}{2\ell^2},
\end{align}
and hence
\begin{align}
	r^{\rm I}_{h,*}=\sqrt{3}\ell, \qquad v^{\rm I}_{*}=2\sqrt{3}\ell.
\end{align}
The temperature-independent contribution is therefore attractive for $r_h>\sqrt{3}\ell$ and repulsive for $\ell<r_h<\sqrt{3}\ell$, close to the finite-volume endpoint.

For Model II, the endpoint of the horizon branch should be distinguished from
the sign change responsible for the AdS core. For the representative choice
\(\epsilon=1\),
\begin{align}
	\tilde{f}_{\rm II}(u)
	=
	\frac{u(1-\ell^2 u)}{(1+\ell^2 u)^2}.
\end{align}
Introducing \(y=\ell^2 u\), this becomes
\begin{align}
	\tilde{f}_{\rm II}(u)
	=
	\frac{1}{\ell^2}
	\frac{y(1-y)}{(1+y)^2}.
\end{align}
The function is positive for \(0<y<1\) and changes sign at \(y=1\). The
AdS-core region therefore lies beyond the zero of \(\tilde f_{\rm II}\), and
is not itself part of the positive horizon branch determined by
\(\tilde f_{\rm II}(u_h)=1/r_h^2>0\).

The branch continuously connected to the Schwarzschild--AdS regime is the
monotonic segment \(0<y<1/3\). Indeed,
\begin{align}
	\frac{d}{dy}
	\left[
	\frac{y(1-y)}{(1+y)^2}
	\right]
	=
	\frac{1-3y}{(1+y)^3},
\end{align}
so this branch ends at the maximum \(y_{\rm end}=1/3\), where
\(\tilde f_{\rm II}=1/(8\ell^2)\). The horizon condition then gives
\begin{align}
	r_{h,\min}^{\rm II}=2\sqrt{2}\ell,
	\qquad
	v_{\min}^{\rm II}=4\sqrt{2}\ell .
\end{align}

For this branch,
\begin{align}
	\nu_{\rm II}(y)
	=
	\frac{y\,\tilde f'_{\rm II}(u)}{\tilde f_{\rm II}(u)}
	=
	\frac{1-3y}{1-y^2}.
\end{align}
Equivalently, the temperature-independent part of the equation of state can be
written as
\begin{align}
	\mathcal A_{\rm II}(y)
	=
	\frac{y(-1+9y-2y^2)}
	{8\pi G\ell^2(1-3y)} ,
	\qquad
	0<y<\frac13 .
\end{align}
Thus \(\mathcal A_{\rm II}\) changes sign when
\begin{align}
	-1+9y-2y^2=0,
\end{align}
or
\begin{align}
	y_\ast=\frac{9-\sqrt{73}}{4}.
\end{align}
Using
\begin{align}
	\frac{1}{r_h^2}
	=
	\frac{1}{\ell^2}
	\frac{y(1-y)}{(1+y)^2},
\end{align}
this corresponds to
\begin{align}
	r_{h,\ast}^{\rm II}
	=
	\ell
	\frac{1+y_\ast}{\sqrt{y_\ast(1-y_\ast)}}
	\simeq3.505\,\ell ,
\end{align}
or \(v_\ast^{\rm II}\simeq7.010\,\ell\). The divergence at
\(r_{h,\min}^{\rm II}=2\sqrt{2}\ell\) instead comes from
\(\tilde f'_{\rm II}(u_h)\to0\). Thus the pressure divergence is controlled by
the endpoint of the positive GR-connected branch, not directly by the
AdS-core sign change of \(\tilde f_{\rm II}\).

The distinction between dS core and AdS core branches can now be stated in a branch-independent way. For dS core geometries the reconstruction function typically approaches a positive limiting value,
\begin{align}
	\tilde f(u)\to F_0>0 ,
	\qquad u\to+\infty .
\end{align}
When this saturation occurs on the physical positive branch, the horizon equation implies a finite lower endpoint,
\begin{align}
	r_{h,\min}=\frac{1}{\sqrt{F_0}} .
\end{align}
This endpoint should not be confused with the extremal black hole radius, which is determined by the simultaneous conditions \(f(r_h)=0\) and \(T=0\). The endpoint \(r_{h,\min}\) is instead the lower limit of the horizon branch appearing in the equation of state; the positive-temperature thermodynamic branch may begin at a larger radius.

For AdS-core geometries the limiting core value is negative,
\begin{align}
	\tilde f(u)\to F_0<0 ,
	\qquad u\to+\infty ,
\end{align}
whereas the horizon equation requires \(\tilde f(u)>0\). The deep core therefore does not itself lie on the positive horizon branch. A finite-volume pressure divergence can occur only if the intermediate positive branch of \(\tilde f\) has a finite maximum. If
\begin{align}
	\psi_{\max}=\max_{\mathcal I_+}\tilde f(u),
\end{align}
where \(\mathcal I_+\) denotes the positive horizon branch, then the corresponding lower bound is
\begin{align}
	r_h\ge \frac{1}{\sqrt{\psi_{\max}}}.
\end{align}
Thus, for AdS core models, the endpoint is controlled by the global structure of the positive horizon branch, not by the limiting negative core curvature itself.

This also clarifies the relation with quasitopological constructions. There, regularity may be implemented through a divergence of the theory function \(h_f\), so that \(h_f^{-1}\) saturates at a finite curvature value. In the present polymerized reconstruction, the metric is written directly in terms of \(\tilde f\). Regularity therefore requires boundedness and smooth saturation of \(\tilde f\), while a finite-volume pressure divergence follows only when the physical positive branch has a finite endpoint at which the inverse branch degenerates. This endpoint-controlled behaviour is one of the main thermodynamic differences between the dS core and AdS core regularizations considered here.

\begin{figure}[t]
	\centering
	\includegraphics[width=\columnwidth]{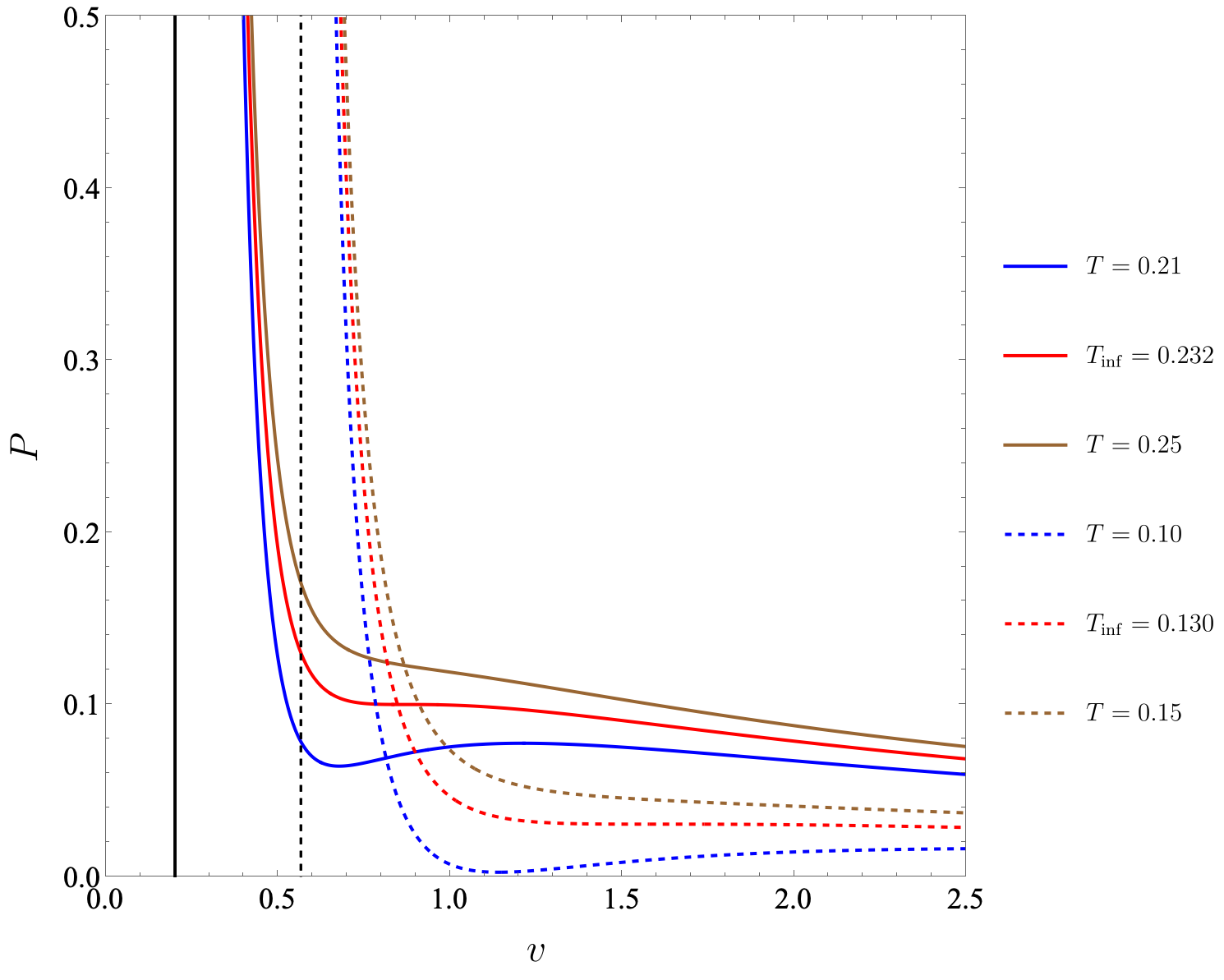}
	\caption{\textbf{$P-v$ diagram: Models I and II.} The isotherms for Model I, which has a dS core, are shown with solid lines, while those for Model II, which has an AdS core, are shown with dashed lines. We fix the regularization scale to $\ell=0.1$ and work in units $G=1$. For each model we display three representative isotherms, corresponding to temperatures below, at, and above the inflection temperature \(T_{\mathrm{inf}}\). The vertical black lines mark the branch endpoints at which the pressure diverges. The solid vertical line corresponds to Model I and is located at \(v_{\min}^{\rm I}=2\ell=0.2\), while the dashed vertical line corresponds to Model II and is located at \(v_{\min}^{\rm II}=4\sqrt{2}\ell=0.4\sqrt{2}\).}
	\label{fig:Pv}
\end{figure}

\section{Conclusions}\label{sec:conclusions}

In this work we studied the thermodynamics of a class of polymerized vacuum RBHs in asymptotically AdS spacetime. The construction is motivated by effective loop quantum gravity dynamics and is formulated in a deparameterized LTB framework, where the dust field is used only as an internal clock. The dust does not act as a physical matter source on the vacuum branch. This distinction is important: the RBHs considered here are not supported by an additional matter sector, but arise from a modified vacuum dynamics encoded in the polymerization function.

We first extended the shell-based polymerized vacuum construction to include a negative cosmological constant. In the vacuum sector the shell Hamiltonian remains conserved, and the Birkhoff-type condition ensures that the static geometry is uniquely reconstructed from the mass parameter once the polymerization function and the bare AdS scale are specified. This led to the universal metric form Eq.~\eqref{eq:f-LTB} which provides the starting point for the thermodynamic analysis. We also clarified the conditions under which the center is curvature-regular. In this formulation, regularity is controlled by the boundedness and smoothness properties of the reconstructed function $\tilde f(u(r))$, rather than by the introduction of an explicit matter profile.

A central feature of the framework is that it naturally accommodates both dS core and AdS core RBHs. This allowed us to compare two distinct vacuum resolutions of the classical singularity within the same thermodynamic setting. We constructed representative pairs of models, in which the AdS core geometries are obtained as deformations of the corresponding dS core solutions, and determined their effective asymptotic parameters. Although both types of cores remove the curvature singularity, they do not lead to identical thermodynamic behavior.

We then developed the extended thermodynamics of these solutions. The bare cosmological constant was interpreted as a thermodynamic pressure, while the black hole mass was treated as enthalpy. The entropy was obtained by integrating the first law and fixing the integration constant so that logarithmic terms have dimensionless arguments, the extended Smarr relation is satisfied, and the Schwarzschild–AdS free-energy normalization is recovered in the appropriate limit. This provides a consistent reduced thermodynamic scheme for the polymerized vacuum geometries considered here.

The main result of our phase structure analysis is that the dominant transition is of Hawking–Page type. The black hole free energy is compared with the corresponding thermal AdS background, and the transition occurs when the black hole branch becomes thermodynamically preferred. Although a small/large black hole branch structure may appear at the level of subdominant saddles, it does not define the physical first-order transition of the canonical ensemble for the models studied here. In this sense, the Hawking–Page transition is the robust thermodynamic feature of these polymerized vacuum RBHs in AdS.

The comparison between dS core and AdS core solutions reveals a robust quantitative distinction, but it should be stated with an important qualification. Along the physical outer-horizon branch and in the large-\(L/\ell\) regime, the dS core members of the model pairs considered here undergo the Hawking--Page transition at a higher temperature than their AdS core counterparts. Close to the lower admissible range of \(L/\ell\), this ordering can be modified, as shown by the numerical results. Thus the sign of the limiting core curvature alone is not the direct explanation of the transition temperature. The Hawking--Page point is determined by the full free-energy balance along the physical branch: the reconstruction function changes the mass-radius relation, the surface-gravity temperature, the entropy correction, and the endpoint structure of the allowed horizons. The regular core therefore affects the thermodynamics by deforming the physical horizon branch relative to the asymptotically AdS background. In this sense, the asymptotic AdS structure controls the existence and type of the Hawking--Page transition, while the regularization core controls its quantitative location.

We also analyzed the equation of state in the $P-v$ plane. The resulting diagrams display a finite-volume divergence reminiscent of the excluded-volume behavior of a van der Waals fluid. However, this divergence should not be interpreted as evidence for a physical van der Waals small/large black hole transition. Instead, it originates from the endpoint of the physical horizon branch selected by the reconstruction function. More precisely, the pressure divergence is controlled by the degeneration of the inverse branch of the horizon equation, rather than by regularity alone or by the local core geometry at $r=0$. This distinction is important because it separates the existence of a regular center from the thermodynamic branch structure of the black hole. 

Our results therefore show that polymerized vacuum RBHs in AdS provide a useful arena for studying how singularity resolution can affect black hole thermodynamics without introducing additional matter charges. The AdS boundary fixes the canonical ensemble and makes the Hawking–Page transition meaningful, while the polymerization function controls both the regular core and the endpoint structure of the physical horizon branch. Within the class of models studied here, the qualitative phase structure is robust, but the transition temperature is sensitive to whether the singularity is replaced by a dS or AdS core.

Several extensions would be worthwhile. First, it would be important to analyze explicit fully covariant completions, such as the construction proposed in Ref.~\cite{LS:26}. Although the thermodynamic results obtained here are self-consistent within the reduced effective prescription, determining the corresponding covariant conserved mass and Iyer--Wald entropy, and verifying their relation to the reduced quantities used in this work, requires specifying a particular covariant action together with its boundary terms. Applying the covariant phase-space and Iyer--Wald formalisms to such a completion would therefore clarify the thermodynamic interpretation and provide a direct way to assess the consequences of the non-uniqueness of the covariant completion.

A second direction would be to formulate the thermodynamics using the Euclidean path-integral approach. This would be especially useful for extending the analysis to asymptotically dS spacetimes, where one could study the canonical ensemble by placing the black hole inside an isothermal cavity. Such an analysis may clarify whether the endpoint structure found here persists in the corresponding dS thermodynamic setting.

Another interesting question concerns the definition of temperature in covariant completions involving an additional scalar field. The model of Ref.~\cite{LS:26}, for example, contains a scalar sector which may be interpreted in terms of a shift-symmetric Lagrangian. It would therefore be interesting to examine whether the notion of temperature is modified in this setting, along the lines of Ref.~\cite{LHK:23}, and to compare the result with other possible covariant completions.

Finally, it would be important to study genuinely dynamical processes, such as black hole evaporation or collapse, in order to understand how the thermodynamic endpoint structure is connected to the global causal structure of the corresponding regular geometries. In the spirit of Refs.~\cite{MS:23-thermo,DSST:23,T:26,MS:23,MMT:22,APS:26}, one could investigate whether the thermodynamic endpoint identified here has a counterpart in the late-time evolution or near core dynamics of RBH spacetimes.

Overall, the analysis demonstrates that the thermodynamics of polymerized vacuum black holes is not determined by regularity alone. The same requirement of singularity resolution can lead to different core geometries, different effective asymptotic parameters, and different Hawking–Page temperatures. The framework developed here therefore provides a controlled setting in which the thermodynamic consequences of quantum gravity-inspired vacuum regularization can be systematically investigated.

\section*{acknowledgments}
We would like to thank Hongguang Liu for useful discussions and helpful comments. S.B. is supported by the National Natural Science Foundation of China under grant Nos. 12275238, W2433018, the National Key
Research and Development Program under grant No. 2020YFC2201503, and the Zhejiang Provincial Natural
Science Foundation of China under grant Nos. LR21A050001 and LY20A050002, and the Fundamental Research Funds for the Provincial Universities of Zhejiang in China under grant No. RF-A2019015. I.S. is supported by the Institute for Theoretical Sciences at Westlake University.

\appendix

\section{Thermodynamic quantities} \label{sec:app:thermo}

In this appendix, we present the thermodynamic quantities used to construct the phase diagrams. For models admitting compact analytic expressions, we display the relevant formulas explicitly. For the remaining models, the expressions are either too cumbersome or not available in closed analytic form, and are therefore omitted from the text. Their thermodynamic properties are analyzed numerically instead. These numerical checks confirm that the qualitative features discussed below are not specific to Models I and II, which are used as representative examples throughout the paper. Throughout this appendix, we set $G=1$.

For Model I, the mass function and Hawking temperature are given, respectively, by
\begin{align}
	m_{\mathrm I}(r_h)
	=
	\frac{r_h^3\left(-\ell^2+L^2+r_h^2\right)}
	{2 L^2\left(-\ell^2+r_h^2\right)} ,
\end{align}
and
\begin{align}
	T_{\mathrm I}(r_h)
	=
	\frac{
		3\ell^4
		-3\ell^2\left(L^2+2r_h^2\right)
		+r_h^2\left(L^2+3r_h^2\right)
	}
	{4\pi L^2 r_h^3}.
\end{align}
The entropy is obtained by integrating the first law. This gives
\begin{align}
	S_{\mathrm I}(r_h)
	=
	\pi\left[
	\frac{2\ell^2 r_h^2-r_h^4}{\ell^2-r_h^2}
	+
	2\ell^2\log\left(r_h^2-\ell^2\right)
	\right]
	+c .
\end{align}
The integration constant is fixed by requiring the logarithm to have a dimensionless argument. Since the only additional length scale in the problem is the regularization scale $\ell$, we choose
\begin{align}
	c=-2\pi\ell^2\log\ell^2 .
\end{align}
The entropy can therefore be written as
\begin{align}
	S_{\mathrm I}(r_h)
	=
	\pi\left[
	\frac{2\ell^2 r_h^2-r_h^4}{\ell^2-r_h^2}
	+
	2\ell^2\log\left(
	\frac{r_h^2-\ell^2}{\ell^2}
	\right)
	\right].
\end{align}

The free energy is defined by
\begin{align}
	F=M-TS .
\end{align}
For Model I, this yields
\begin{widetext}
\begin{align}
	F_{\mathrm I}(r_h)
	=
	\frac{1}{4L^2 r_h^3}
	\Bigg[
	\frac{2r_h^6\left(-\ell^2+L^2+r_h^2\right)}
	{-\ell^2+r_h^2}
	&-
	\left(
	3\ell^4
	-3\ell^2\left(L^2+2r_h^2\right)
	+r_h^2\left(L^2+3r_h^2\right)
	\right)
	\left(
	\frac{2\ell^2 r_h^2-r_h^4}{\ell^2-r_h^2}
	+
	2\ell^2\log\left(
	\frac{r_h^2-\ell^2}{\ell^2}
	\right)
	\right)
	\Bigg].
\end{align}
\end{widetext}
This choice of integration constant is also compatible with the normalization of the thermal AdS background. In particular, the free energy of the $m=0$ thermal AdS saddle is taken to vanish. In the present parametrization, this background normalization is recovered by first taking the GR limit $\ell\to0$, and then taking the horizon radius $r_h\to0$.
\begin{figure}[t]
	\centering
	\includegraphics[width=\columnwidth]{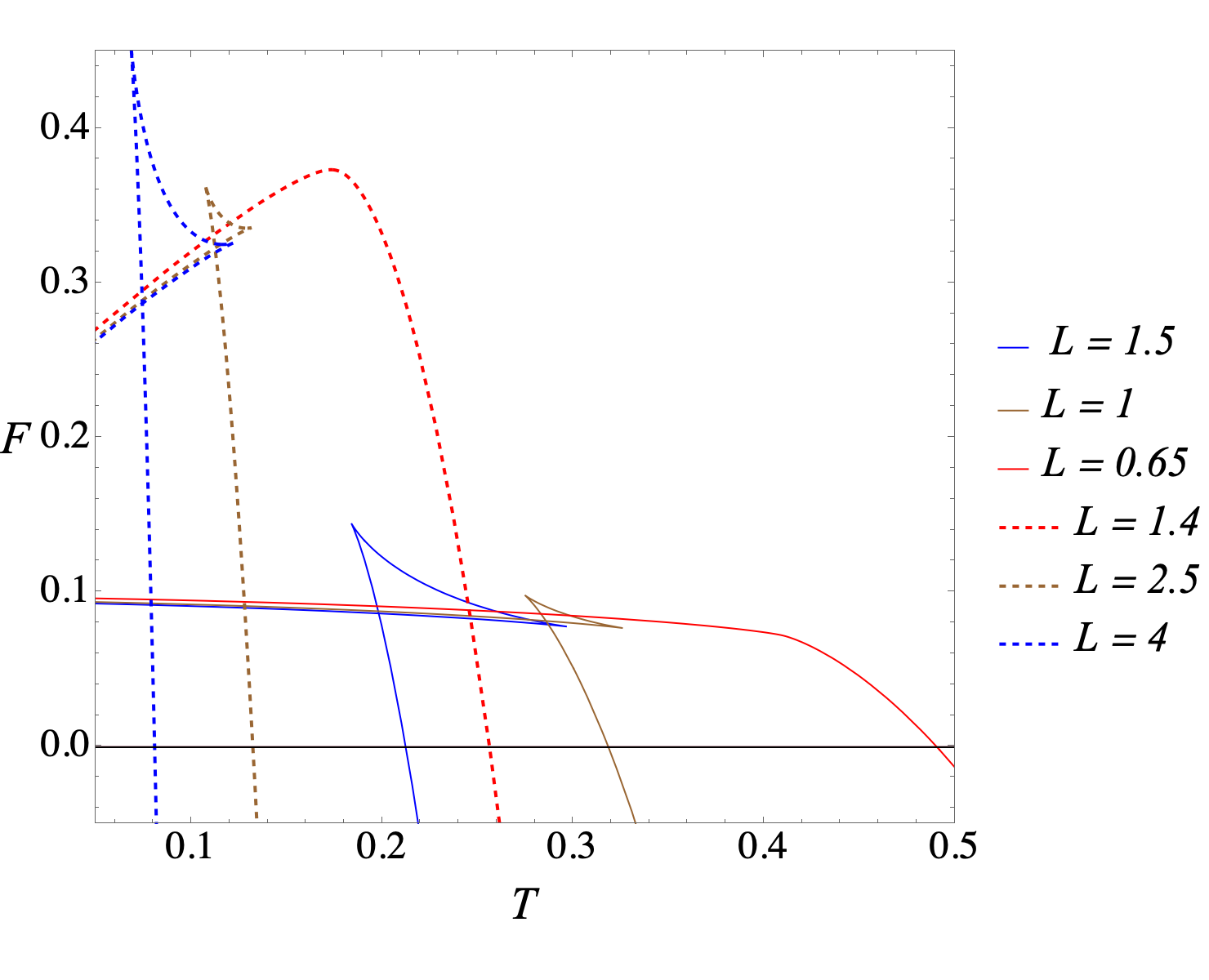}
	\caption{\textbf{Free energy: Models III and IV.} The free energy \(F\) is plotted parametrically as a function of the temperature \(T\), with the horizon radius used as the parameter. Solid curves correspond to Model III, which has a dS core, while dashed curves correspond to Model IV, which has an AdS core. The minimal-length scale is fixed to \(\ell=0.1\), and the cosmological-constant parameter \(L\) is varied. We work in units with \(G=1\).}
	\label{fig:PT-34}
\end{figure}

\begin{figure*}[t]
	\centering
	\includegraphics[width=\textwidth]{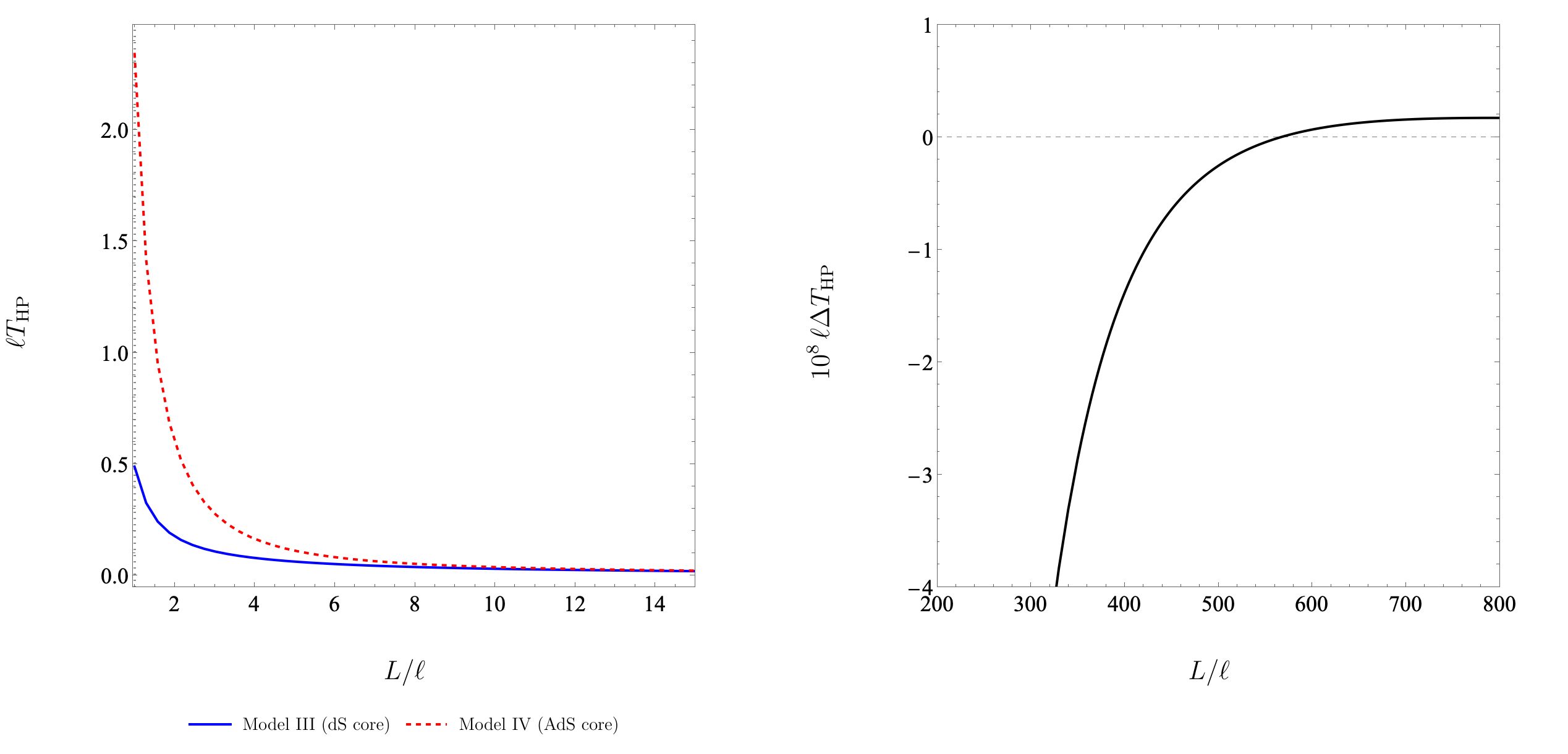}
	\caption{\textbf{Hawking--Page temperature for Models III and IV.}
		Left panel: dimensionless Hawking--Page temperature $\ell T_{\rm HP}$ as a function of the dimensionless AdS scale $L/\ell$ for Model III, which has a dS core, and Model IV, which has an AdS core. Right panel: corresponding dimensionless difference
		$\ell \Delta T_{\rm HP}=\ell\left(T^{\rm {III}}_{\rm HP}-T^{\rm {IV}}_{\rm HP}\right)$.
		The vertical dotted line marks the lower admissible boundary $L/\ell=1$, required for the correct asymptotic AdS behaviour as shown in Table~\ref{tab:polymerization-models}. The ordering of the transition temperatures reverses close to this boundary, while for sufficiently large $L/\ell$ one finds $T^{\rm III}_{\rm HP}>T^{\rm IV}_{\rm HP}$.}	\label{fig:Thp-III-IV}
\end{figure*}

For completeness, we also give the conjugate potentials entering the
Smarr relation. These make it straightforward to verify explicitly that
Eq.~\eqref{eq:smarr} is satisfied. From Eq.~\eqref{eq:conjugates}, the
thermodynamic volume for Model I is
\begin{align}
	V_{\mathrm I}(r_h)=\frac{4\pi r_h^3}{3}.
\end{align}
The potential conjugate to the minimal-length scale is
\begin{widetext}
\begin{align}
	\Psi_{\mathrm I}(r_h)&=
	\frac{\ell}{2L^2 r_h^3\left(r_h^2-\ell^2\right)}
	\Bigg\{
	r_h^2
	\left[
	6\ell^4
	+5L^2 r_h^2
	+9r_h^4
	-3\ell^2\left(2L^2+5r_h^2\right)
	\right]\nonumber\\
	&-2\left(r_h^2-\ell^2\right)
	\left[
	3\ell^4
	-3\ell^2\left(L^2+2r_h^2\right)
	+r_h^2\left(L^2+3r_h^2\right)
	\right]
	\log\left(\frac{r_h^2}{\ell^2}-1\right)
	\Bigg\}.
\end{align}
\end{widetext}
For Model II, the corresponding expressions are considerably more involved. The mass function and Hawking temperature are given, respectively, by
\begin{widetext}
	\begin{align}
		m_{\mathrm{II}}(r_h)
		&=
		\frac{
			-\ell^2 L^4 r_h^3
			+L^4 r_h^5
			+\epsilon \ell^2 L^2 r_h^3\left(2\ell^2-L^2+2r_h^2\right)
			-\Delta_{\epsilon}}
		{4\epsilon \ell^2 L^4\left(\ell^2+r_h^2\right)} ,
		\label{eq:m-model-II}
	\end{align}
	\begin{align}
		T_{\mathrm{II}}(r_h)
		&=
		-\frac{
			2L^{10}r_h^{10}\mathcal{N}_{\mathrm{II}}(r_h)
		}{
			\pi\ell^2
			\big[
			L^4r_h^3((\epsilon-1)\ell^2-3r_h^2)+\Delta_\epsilon
			\big]^2
			\big[
			-L^4r_h^3((\epsilon-1)\ell^2+(1+2\epsilon)r_h^2)+\Delta_\epsilon
			\big]^2
		},
		\label{eq:T-model-II}
	\end{align}
	\begin{align}
		\mathcal{N}_{\mathrm{II}}(r_h)
		&=
		\epsilon^4\ell^6L^4r_h^3
		\big[3\ell^4-L^2r_h^2+3r_h^4-3\ell^2(L^2-2r_h^2)\big]
		\nonumber\\
		&\quad
		-2L^2(3\ell^6-7\ell^4r_h^2+\ell^2r_h^4+3r_h^6)
		\big[L^4r_h^3(\ell^2-r_h^2)+\Delta_\epsilon\big]
		\nonumber\\
		&\quad
		-2\epsilon\Delta_\epsilon
		\big[-3\ell^8-3\ell^6r_h^2+3L^2r_h^6+\ell^4(-19L^2r_h^2+3r_h^4)
		+\ell^2(-4L^2r_h^4+3r_h^6)\big]
		\nonumber\\
		&\quad
		+2\epsilon L^4r_h^3
		\big[3\ell^{10}+3\ell^8L^2-30\ell^4L^2r_h^4+3L^2r_h^8
		+6\ell^6(4L^2r_h^2-r_h^4)
		+\ell^2(-16L^2r_h^6+3r_h^8)\big]
		\nonumber\\
		&\quad
		-\epsilon^3\ell^4
		\big[
		2L^4r_h^5(9\ell^4-6L^2r_h^2+9r_h^4-2\ell^2(7L^2-9r_h^2))
		+\Delta_\epsilon(-3\ell^4+r_h^2(L^2-3r_h^2)+3\ell^2(L^2-2r_h^2))
		\big]
		\nonumber\\
		&\quad
		-\epsilon^2\ell^2
		\big[
		-3\Delta_\epsilon(\ell^6+3r_h^4(L^2-r_h^2)-\ell^4(L^2+r_h^2)
		+\ell^2(6L^2r_h^2-5r_h^4))
		\nonumber\\
		&\qquad\qquad
		+L^4r_h^3(9\ell^8+29L^2r_h^6-3r_h^8-3\ell^6(L^2-20r_h^2)
		+\ell^4(-41L^2r_h^2+90r_h^4)
		+\ell^2(39L^2r_h^4+36r_h^6))
		\big].
	\end{align}
\end{widetext}
where
\begin{align}
	\Delta_{\epsilon}
	&=
	\sqrt{
		L^8 r_h^6
		\left[
		(\epsilon-1)^2\ell^4
		-2(1+3\epsilon)\ell^2 r_h^2
		+r_h^4
		\right]}.
\end{align}
For the branch used in the phase structure analysis below, the entropy obtained by integrating the first law is
\begin{widetext}
\begin{align}
	S_{\mathrm{II}}(r_h)=
	\frac{\pi}{G}
	\left[
	\frac{\Delta\left(6\ell^2 r_h+5r_h^3\right)}
	{2\left(\ell^2+r_h^2\right)}-
	\frac{3\left(-\ell^4+\ell^2 r_h^2+r_h^4\right)}
	{2\left(\ell^2+r_h^2\right)}
	+
	6\ell^2
	\log\left(
	\frac{3r_h+\Delta}{2(r_h-\Delta)}
	\right)
	\right],
	\qquad
	\Delta=\sqrt{r_h^2-8\ell^2},
	\qquad
	r_h>\sqrt{8}\ell .
	\label{eq:S-model-II}
\end{align}
\end{widetext}
The condition $r_h>\sqrt{8}\ell$ ensures that the entropy is real on the branch under consideration.

The free energy, defined by $F=M-TS$, takes the form
\begin{widetext}
	\begin{align}
		F_{\mathrm{II}}(r_h)
		=
		\frac{1}
		{4\ell^2 L^2 r_h^2(\ell^2+r_h^2)}
		\left[
		r_h^5
		\left(
		2\ell^4
		-
		2\ell^2(L^2-r_h^2)
		+
		L^2r_h(r_h-\Delta)
		\right)
		+
		\frac{4\mathcal A_{\mathrm{II}}\mathcal B_{\mathrm{II}}}{(\Delta-3r_h)^4}
		\right],
		\qquad
		\Delta=\sqrt{r_h^2-8\ell^2},
		\label{eq:F-model-II}
	\end{align}
	\begin{align}
		\mathcal A_{\mathrm{II}}
		=&
		12L^2r_h^6(r_h-\Delta)
		+
		12\ell^8(-6r_h+\Delta)
		-
		\ell^4
		\left[
		L^2r_h^2(103r_h-69\Delta)
		+
		18r_h^4(3r_h+\Delta)
		\right]\nonumber \\&+
		\ell^6
		\left[
		9r_h^2(-15r_h+\Delta)
		-
		4L^2(-34r_h+3\Delta)
		\right]
		+
		\ell^2
		\left[
		3r_h^6(3r_h-5\Delta)
		+
		5L^2r_h^4(-13r_h+3\Delta)
		\right],
		\label{eq:A-model-II}
	\end{align}
	\begin{align}
		\mathcal B_{\mathrm{II}}
		=&
		3\ell^4
		-
		3\ell^2r_h^2
		-
		3r_h^4
		+
		6\ell^2r_h\Delta
		+
		5r_h^3\Delta
		+
		12\ell^2(\ell^2+r_h^2)
		\log\left(
		\frac{3r_h+\Delta}{2(r_h-\Delta)}
		\right).
		\label{eq:B-model-II}
	\end{align}
\end{widetext}
As in Model I, the additive constant in the entropy is chosen consistently with the background normalization used for the free energy. The expressions above are the ones used to generate the representative phase diagrams for the AdS core branch. The corresponding thermodynamic volume for Model II is again
\begin{align}
	V_{\mathrm{II}}(r_h)=\frac{4\pi r_h^3}{3}.
\end{align}
The potential conjugate to the minimal-length scale is
\begin{widetext}
	\begin{align}
		\Psi_{\mathrm{II}}(r_h)
		&=
		\frac{1}{2\ell^3L^2r_h^2(\ell^2+r_h^2)^2}
		\Bigg[
		2\ell^2r_h^5(\ell^2+r_h^2)
		\left(
		2\ell^2+r_h^2
		+L^2\left[-1+\frac{2r_h}{\Delta}\right]
		\right)
		\nonumber\\
		&\quad
		-\ell^2r_h^5
		\left(
		2\ell^4
		-2\ell^2(L^2-r_h^2)
		+L^2r_h(r_h-\Delta)
		\right)
		\nonumber\\
		&\quad
		+r_h^5(\ell^2+r_h^2)
		\left(
		-2\ell^4
		+2\ell^2(L^2-r_h^2)
		+L^2r_h(-r_h+\Delta)
		\right)
		\nonumber\\
		&\quad
		+\frac{
			8\ell^2(\ell^2+r_h^2)
			\mathcal A_{\mathrm{II}}(r_h)
			\mathcal C_{\mathrm{II}}(r_h)
		}{
			\Delta(-3r_h+\Delta)^4(-r_h+\Delta)
			\left[-4\ell^2+r_h(-r_h+\Delta)\right]
		}
		\Bigg],
		\label{eq:Psi-model-II}
	\end{align}
	\begin{align}
		\mathcal C_{\mathrm{II}}(r_h)
		&=
		48\ell^6
		+66\ell^4r_h^2
		-49\ell^2r_h^4
		-19r_h^6
		+90\ell^4r_h\Delta
		+125\ell^2r_h^3\Delta
		+19r_h^5\Delta
		\nonumber\\
		&\quad
		-24(\ell^2+r_h^2)
		\left(
		16\ell^4
		-2\ell^2r_h(-3r_h+\Delta)
		+r_h^3(-r_h+\Delta)
		\right)
		\log(r_h-\Delta)
		\nonumber\\
		&\quad
		+12(\ell^2+r_h^2)
		\left(
		16\ell^4
		-2\ell^2r_h(-3r_h+\Delta)
		+r_h^3(-r_h+\Delta)
		\right)
		\log\left(4\ell^2+r_h^2-r_h\Delta\right).
	\end{align}
\end{widetext}
For the remaining models, the thermodynamic conjugate potentials can be obtained in the same way from Eq.~\eqref{eq:conjugates}. However, the resulting expressions do not provide additional conceptual insight. We therefore do not display them explicitly. Instead, we verify the Smarr relation directly in the numerical implementation used to generate the thermodynamic plots.

For Model III, the mass function is given by
\begin{align}
	m_{\mathrm{III}}(r_h)
	=
	\frac{
		r_h^3\left(r_h^4+L^2r_h^2-\ell^4\right)
	}
	{
		2L^2\left(r_h^4-\ell^4\right)
	}.
	\label{eq:m-model-III}
\end{align}
The Hawking temperature takes the form
\begin{align}
	T_{\mathrm{III}}(r_h)
	=
	\frac{
		3\ell^8
		-
		5\ell^4 L^2 r_h^2
		-
		6\ell^4 r_h^4
		+
		L^2 r_h^6
		+
		3r_h^8
	}
	{
		4\pi L^2 r_h^3\left(\ell^4+r_h^4\right)
	}.
	\label{eq:T-model-III}
\end{align}
Integrating the first law gives the entropy
\begin{align}
	S_{\mathrm{III}}(r_h)
	&=
	\pi
	\left[
	r_h^2
	\left(
	1+\frac{\ell^4}{\ell^4-r_h^4}
	\right)
	-
	\ell^2
	\log\left(
	\frac{r_h^2+\ell^2}{r_h^2-\ell^2}
	\right)
	\right].
	\label{eq:S-model-III}
\end{align}
The condition $r_h>\ell$ ensures that the logarithm is real on the black hole branch considered here. The corresponding free energy, $F=M-TS$, is
\begin{widetext}
	\begin{align}
		F_{\mathrm{III}}(r_h)
		=
		\frac{
			-6\ell^8 r_h^2
			+
			r_h^8\left(L^2-r_h^2\right)
			+
			\ell^4\left(10L^2r_h^4+11r_h^6\right)
			+
			\ell^2
			\left[
			3\ell^8
			+
			r_h^6\left(L^2+3r_h^2\right)
			-
			\ell^4\left(5L^2r_h^2+6r_h^4\right)
			\right]
			\log\left(
			\frac{r_h^2+\ell^2}{r_h^2-\ell^2}
			\right)
		}
		{
			4L^2r_h^3\left(\ell^4+r_h^4\right)
		}.
		\label{eq:F-model-III}
	\end{align}
\end{widetext}
As in the previous cases, the additive constant in the entropy is fixed consistently with the background normalization used in the free energy.

\begin{figure}[t]
	\centering
	\includegraphics[width=\columnwidth]{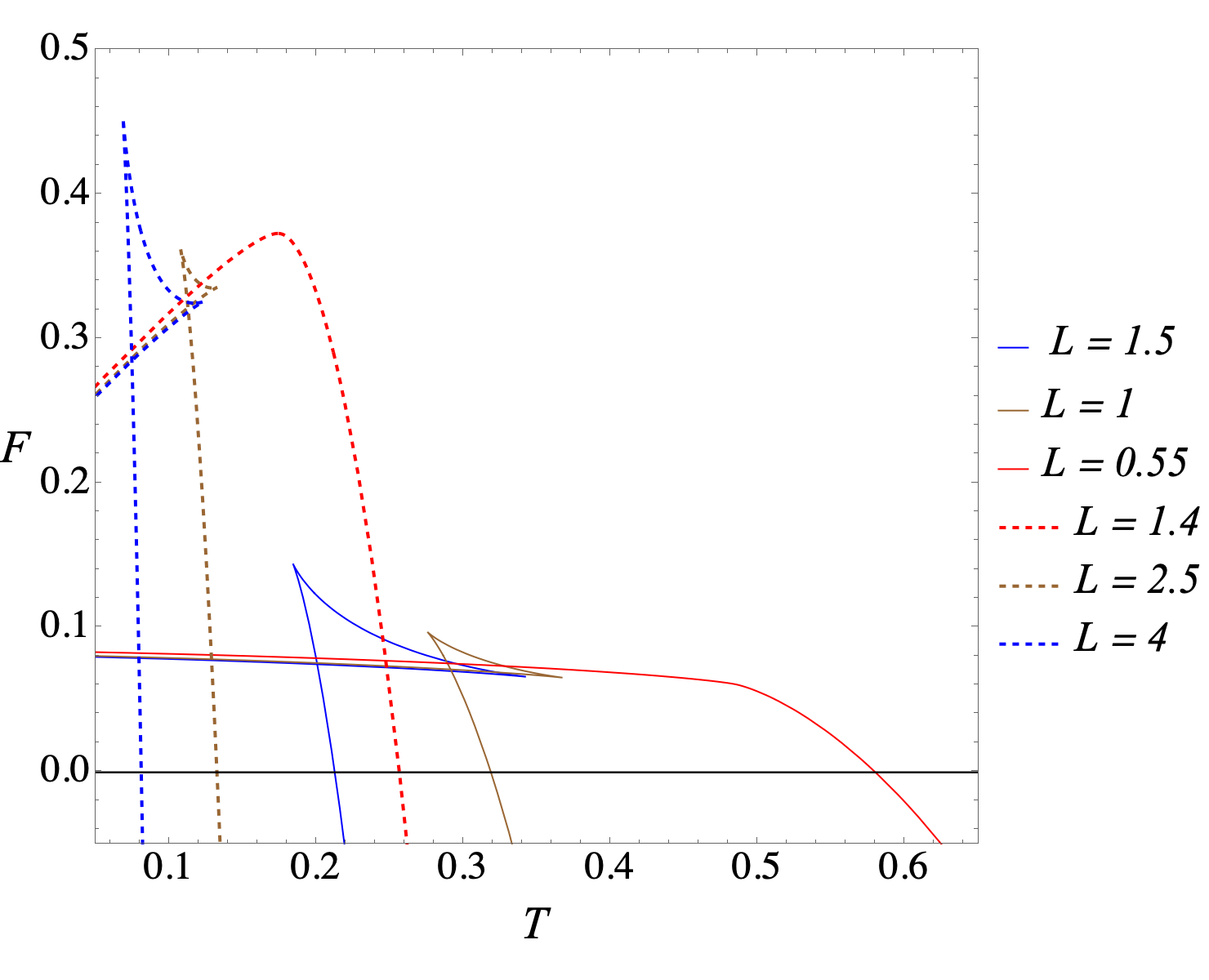}
	\caption{\textbf{Free energy: Models V and VI.} The free energy \(F\) is plotted parametrically as a function of the temperature \(T\), with the horizon radius used as the parameter. Solid curves correspond to Model V, which has a dS core, while dashed curves correspond to Model VI, which has an AdS core. The minimal-length scale is fixed to \(\ell=0.1\), and the cosmological-constant parameter \(L\) is varied. We work in units with \(G=1\).}
	\label{fig:PT-56}
\end{figure}

\begin{figure*}[t]
	\centering
	\includegraphics[width=\textwidth]{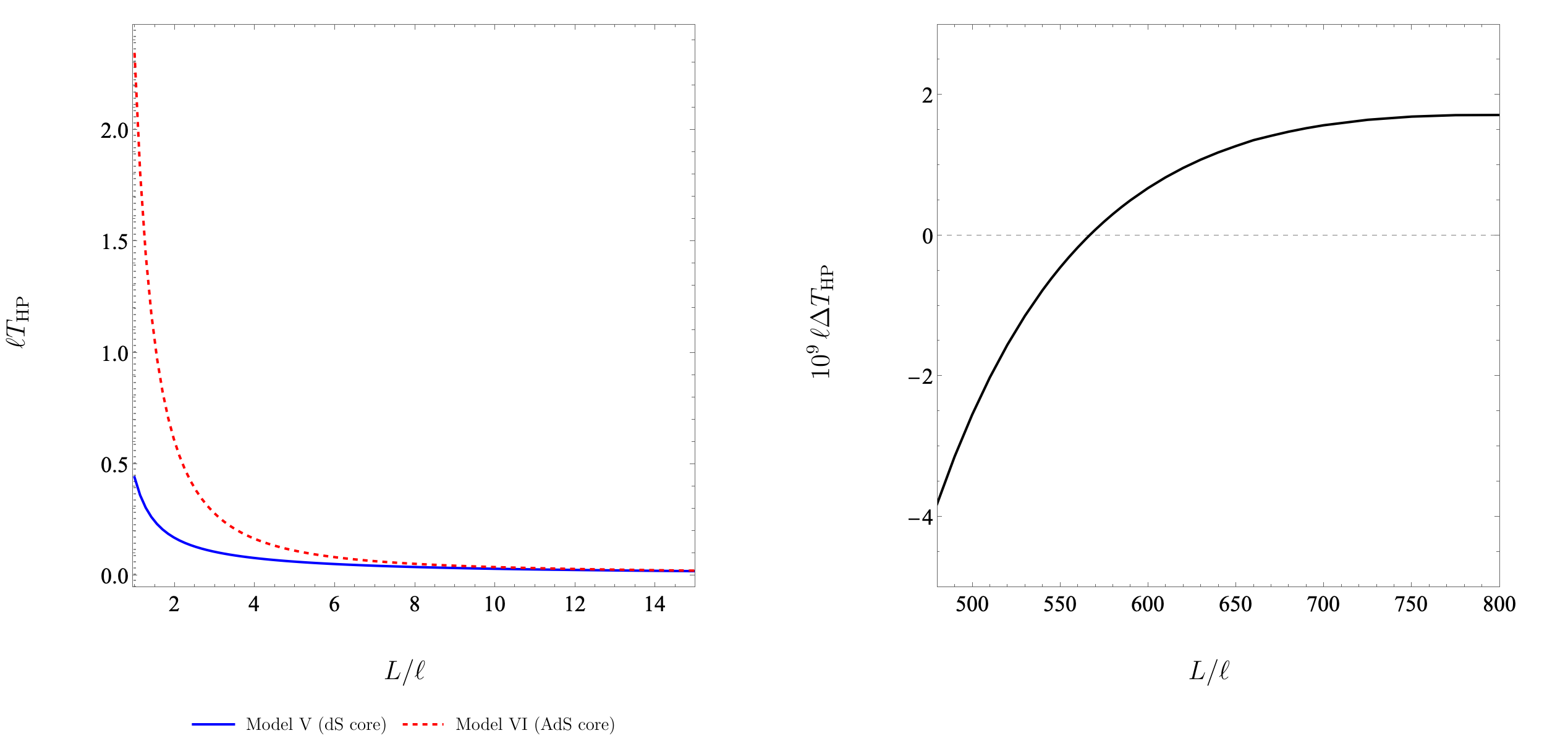}
	\caption{\textbf{Hawking--Page temperature for Models V and VI.}
		Left panel: dimensionless Hawking--Page temperature $\ell T_{\rm HP}$ as a function of the dimensionless AdS scale $L/\ell$ for Model V, which has a dS core, and Model VI, which has an AdS core. Right panel: corresponding dimensionless difference
		$\ell \Delta T_{\rm HP}=\ell\left(T^{\rm V}_{\rm HP}-T^{\rm VI}_{\rm HP}\right)$.
		The vertical dotted line marks the lower admissible boundary $L/\ell=1$, required for the correct asymptotic AdS behaviour as shown in Table~\ref{tab:polymerization-models}. The ordering of the transition temperatures reverses close to this boundary, while for sufficiently large $L/\ell$ one finds $T^{\rm V}_{\rm HP}>T^{\rm VI}_{\rm HP}$.}
	\label{fig:Thp-V-VI}
\end{figure*}
For Model V, the mass, temperature, and entropy are given by
\begin{align}
	m_{\mathrm{V}}(r_h)
	=
	\frac{
		-\ell^4 r_h^3
		+r_h^7
		+L^2 r_h^3\sqrt{r_h^4-\ell^4}
	}{
		2L^2\left(r_h^4-\ell^4\right)
	},
	\label{eq:m-model-V}
\end{align}
\begin{widetext}
	\begin{align}
		T_{\mathrm{V}}(r_h,\ell,L)
		&=
		\frac{
			3\ell^8
			+3r_h^8
			+L^2 r_h^4\sqrt{r_h^4-\ell^4}
			-3\ell^4\left(2r_h^4+L^2\sqrt{r_h^4-\ell^4}\right)
		}{
			4\pi L^2 r_h^5\sqrt{r_h^4-\ell^4}
		}.
		\label{eq:T-model-V}
	\end{align}
\end{widetext}
and
\begin{align}
	S_{\mathrm{V}}(r_h,\ell,L)
	=
	\frac{
		\pi\left(r_h^4-2\ell^4\right)
	}{
		\sqrt{r_h^4-\ell^4}
	},
	\qquad
	r_h>\ell .
	\label{eq:S-model-V}
\end{align}
The condition $r_h>\ell$ selects the real black hole branch. The corresponding free energy $F=M-TS$ is
\begin{widetext}
\begin{align}
	F_{\mathrm{V}}(r_h)
	=
	\frac{
		-6\ell^8
		-r_h^8
		+L^2 r_h^4\sqrt{r_h^4-\ell^4}
		+\ell^4\left(9r_h^4+6L^2\sqrt{r_h^4-\ell^4}\right)
	}{
		4L^2 r_h^5
	}.
	\label{eq:F-model-V}
\end{align}
\end{widetext}
For the remaining models, solving the horizon condition explicitly for the mass as a function of the horizon radius is either not possible in closed analytic form or leads to expressions that are too cumbersome to be useful. Even in cases where an analytic solution can in principle be obtained, for example through Cardano’s formula, the resulting expressions do not provide additional physical insight. We therefore treat these models numerically whenever needed.

\section{Phase structure of various models} \label{sec:app:PT}

In the main text, we focused primarily on Models I and II. This choice was motivated by the fact that Model I corresponds to the well-known Hayward-type RBH, while Model II provides its AdS core counterpart. Our main conclusions were therefore drawn from this representative pair. However, in order to test the robustness of these conclusions and to provide further evidence for their possible universality, it is useful to examine additional examples within the same class of polymerized vacuum models.

This is the purpose of the present appendix section. We extend the analysis to Models III, IV, V, and VI. Models VII and VIII are not considered here because their thermodynamic expressions are significantly more complicated and do not add qualitatively new features to the discussion. As shown in Appendix~\ref{sec:app:thermo}, only some of the models admit analytic expressions for the basic thermodynamic quantities. For this reason, the plots presented in this appendix section are obtained by numerical evaluation. This is mainly due to the fact that, for these models, the mass cannot in general be written as a simple analytic function of the horizon radius.

The same qualitative picture persists for the additional model pairs considered in this appendix. For Models III and IV, the free energy as a function of temperature is shown in Fig.\ref{fig:PT-34}. As in the representative case discussed in the main text, the thermodynamically relevant transition is the Hawking–Page transition, determined by the crossing of the black hole free energy with the thermal AdS background. The small/large black hole branch structure is still present, but it occurs at positive free energy and therefore represents only a subdominant saddle structure rather than the physical first-order transition of the canonical ensemble. The corresponding Hawking–Page temperatures, obtained numerically and displayed in Fig.\ref{fig:Thp-III-IV}, again show that the dS core solution has a higher transition temperature than its AdS core counterpart.

This conclusion is further supported by Models V and VI. Their free-energy diagrams and Hawking–Page temperatures are shown in Figs.~\ref{fig:PT-56} and \ref{fig:Thp-V-VI}, respectively. No new qualitative phase behavior appears: the Hawking–Page transition remains the dominant transition, while the dS core geometry again exhibits a higher Hawking–Page temperature than the corresponding AdS core geometry. These additional examples therefore indicate that the main thermodynamic conclusions of Sec.\ref{sec:app:thermo} are not special to Models I and II, but persist across a broader class of polymerized vacuum RBHs.

\end{document}